\def\doi#1{\href{https://doi.org/\detokenize{#1}}{\url{https://doi.org/\detokenize{#1}}}}
\newcommand{\STAB}[1]{\begin{tabular}{@{}c@{}}#1\end{tabular}}
\newcommand{\rot}[1]{\rotatebox[origin=c]{90}{#1}} 
\newcommand{\rotmulti}[2]{\multirow{#1}{*}{\STAB{\rotatebox[origin=c]{90}{#2}}}}
\newcommand{\tool}{\textsf{symQV}\xspace}
\pgfplotsset{width=8cm,compat=1.9,height=4cm}
\definecolor{forestgreen}{RGB}{0, 155, 85}
\newcommand{\qmodel}{M_\mathcal{Q}}
\newcommand{\spec}{\varphi}
\newcommand{\hilbert}{\mathcal{H}}
\newcommand{\bigkron}{\bigotimes}
\newcommand{\valuation}{V_0}
\newcommand\defeq{:=}
\newcommand{\spaceLand}{\,\land\,}
\newsavebox{\singlegateX}
\newsavebox{\singlegateZ}
\newsavebox{\singlegateH}
\newsavebox{\cxgate}
\newsavebox{\czgate}
\newsavebox{\swapgate}
\newsavebox{\measurement}
\newcommand{\singlegate}[1]{{\begin{quantikz}\lstick{} & \gate{#1} & \qw\end{quantikz}}}
\savebox{\singlegateX}{\singlegate{X}}
\savebox{\singlegateZ}{\singlegate{Z}}
\savebox{\singlegateH}{\singlegate{H}}
\savebox{\cxgate}{
    {\begin{quantikz}
        \lstick{} & \ctrl{1} & \qw \\
        \lstick{} & \targ{} & \qw \\
    \end{quantikz}}
}
\savebox{\czgate}{
    {\begin{quantikz}
        \lstick{} & \ctrl{1} & \qw \\
        \lstick{} & \gate{Z} & \qw \\
    \end{quantikz}}
}
\savebox{\swapgate}{
    {\begin{quantikz}
        \lstick{} & \swap{1} & \qw \\
        \lstick{} & \targX{} & \qw \\
    \end{quantikz}}
}
\savebox{\measurement}{
    {\begin{quantikz}
        \lstick{} & \meter{$M$} & \cw
    \end{quantikz}}
}
\newcommand{\complexes}{\mathbb{C}}
\newcommand{\mat}[1] {\begin{bmatrix}#1\end{bmatrix}}
\newcommand{\SWAP}{\mathit{SWAP}}
\let\oldnl\nl% Store \nl in \oldnl
\newcommand{\nonl}{\renewcommand{\nl}{\let\nl\oldnl}}% Remove line number for one line
\DeclareFixedFont{\ttb}{T1}{txtt}{bx}{n}{8} % for bold
\DeclareFixedFont{\ttm}{T1}{txtt}{m}{n}{8}  % for normal
\definecolor{light-gray}{gray}{0.9}
\definecolor{deepblue}{rgb}{0,0,0.5}
\definecolor{deepred}{rgb}{0.6,0,0}
\definecolor{deepgreen}{rgb}{0,0.5,0}
\newcommand\pythonstyle{\lstset{
language=Python,
basicstyle=\linespread{1}\ttm,
morekeywords={self},              % Add keywords here
keywordstyle=\ttb\color{deepblue},
emph={MyClass,__init__},          % Custom highlighting
emphstyle=\ttb\color{deepred},    % Custom highlighting style
stringstyle=\color{deepgreen},
frame=tlbr,                         % Any extra options here
showstringspaces=false,
numbers          = left, % {none, left, right}
numberstyle      = \scriptsize\color{black},
}}
\newcommand\pythonexternal[2][]{{
\pythonstyle
\lstinputlisting[#1]{#2}}}
\newcommand\pythonstylesmall{\lstset{
language=Python,
basicstyle=\tiny\linespread{1.3}\selectfont\ttm,
morekeywords={self},              % Add keywords here
keywordstyle=\ttb\color{deepblue},
emph={MyClass,__init__},          % Custom highlighting
emphstyle=\ttb\color{deepred},    % Custom highlighting style
stringstyle=\color{deepgreen},
frame=tlbr,                         % Any extra options here
showstringspaces=false,
numbers          = left, % {none, left, right}
numberstyle      = \scriptsize\color{black},
}}
\newcommand\pythonexternalsmall[2][]{{
\pythonstylesmall
\lstinputlisting[#1]{#2}}}
\Crefname{theorem}{Theorem}{Theorems}
\newtheorem{mydef}{Definition}
\Crefname{mydef}{Definition}{Definitions}
\newtheorem{myenc}{Encoding}
\Crefname{myenc}{Encoding}{Encodings}
\newtheorem{myenc2}{Encoding 2.\!\!}
\Crefname{myenc2}{Encoding 2.\!\!}{Encodings 2.\!\!}
\Crefname{example}{Example}{Examples}
\crefname{lstlisting}{listing}{listings}
\Crefname{lstlisting}{Listing}{Listings}
\newcommand{\sat}{\textsc{Sat}}
\newcommand{\unsat}{\textsc{Unsat}}
\newcommand{\timeout}{\texttt{timeout}}
\newcommand{\memoryerror}{\texttt{out of memory}}
\begin{document}
\title{\textsf{symQV}: Automated Symbolic Verification of Quantum Programs}

% If the paper title is too long for the running head, you can set
% an abbreviated paper title here
%
\author{Fabian Bauer-Marquart\inst{1}\thanks{The work was done while the first author was employed at the University of Konstanz.}%
\textsuperscript{(\Letter)}\orcidlink{0000-0001-9312-1706}\ \and
Stefan Leue\inst{1}\orcidlink{0000-0002-4259-624X} \and 
Christian Schilling\inst{2}\orcidlink{0000-0003-3658-1065}}

\authorrunning{F. Bauer-Marquart et al.}
% First names are abbreviated in the running head.
% If there are more than two authors, 'et al.' is used.
%
\institute{University of Konstanz, Konstanz, Germany\\
\email{\{fabian.marquart,stefan.leue\}@uni-konstanz.de}\\
\and
Aalborg University, Aalborg, Denmark\\
\email{christianms@cs.aau.dk}
}
\maketitle              % typeset the header of the contribution

\begin{abstract}
We present \tool, a symbolic execution framework for writing and verifying quantum computations in the quantum circuit model.
\tool can automatically verify that a quantum program complies with a first-order specification.
We formally introduce a symbolic quantum program model.
This allows to encode the verification problem in an SMT formula, which can then be checked with a $\mathbf\delta$-complete decision procedure.
We also propose an abstraction technique to speed up the verification process.
Experimental results show that the abstraction improves \tool's scalability by an order of magnitude to quantum programs with 24 qubits (a $ 2^{24}$-dimensional state space).
\keywords{
Quantum computing \and formal verification \and symbolic execution \and abstraction.
}
\end{abstract}
\section{Introduction}\label{sec:introduction}
Quantum computing %is a promising paradigm which
bears great potential in increasing the scalability of problem solving in many areas such as optimization \cite{kadowaki1998quantum,farhi2014quantum}, database search \cite{DBLP:conf/stoc/Grover96}, cryptography \cite{DBLP:journals/siamcomp/Shor97}, quantum dynamics simulation \cite{DBLP:journals/pnas/ChildsMNRS18}, satisfiability problems \cite{centrone2021experimental}, and machine learning \cite{DBLP:journals/corr/abs-2110-13162}.
Recently, quantum computing has gained momentum with applications in safety-critical domains such as traffic flow \cite{goddard2017will}, aircraft load \cite{DBLP:journals/corr/abs-1903-08189}, logistics \cite{ajagekar2020quantum}, and medical diagnostics \cite{DBLP:journals/corr/abs-2102-06535}.
Furthermore, quantum simulation \cite{Qiskit,DBLP:journals/corr/abs-1803-00652,cirq_developers_2021_5182845} and quantum computers in the cloud \cite{ibm-quantum-roadmap} are now available.

As with classical programs, detecting bugs in quantum programs is a crucial problem.
For classical programs, there exist powerful formal verification techniques to automatically verify that the programs comply with a formal specification \cite{ClarkeHVB2018}.
State-of-the-art verifiers, e.g., for C programs \cite{KLEE,DBLP:conf/cav/BeyerK11,DBLP:conf/tacas/KroeningT14} perform verification \emph{symbolically}:
The developer marks specific program inputs as symbolic so that the verifier knows to use these as the ``search space.'' 
The verifier then proves that all possible inputs to the program comply with the specification.

For quantum programs, this level of automation is not yet available.
In this work, we aim to bridge this gap.
Existing approaches to quantum program analysis can be categorized in three directions:

\textbf{Interactive proof assistants:}
Several approaches \cite{DBLP:journals/corr/abs-1803-00699,DBLP:conf/birthday/LiuWZGLHDY17,DBLP:conf/cav/LiuZWYLLYZ19,DBLP:conf/esop/CharetonBBPV21,DBLP:conf/itp/Hietala0HL021} propose using interactive proof assistants to verify quantum programs.
These works provide a large set of deductions but require familiarity with %specialized logics \cite{DBLP:journals/toplas/Ying11} or 
proof assistants such as Coq \cite{Coq} or Isabelle/HOL \cite{Isabelle}, competence in proof-writing, and many hours of manual programming work to conduct the verification.
These techniques are not fully automatic, which would be crucial for keeping pace with the development of quantum algorithms~\cite{quantumalgorithmzoo}.

\textbf{Automated quantum compiler verification:}
Amy \cite{DBLP:journals/corr/abs-1805-06908} proposes an efficient path-sum framework that performs fully automated equivalence checking of a quantum program against a simpler version of the same program, as well as against path-sums that the author uses as specification. The approach is applicable to quantum programs written with quantum gates from the Clifford$+T$ group.
Shi et al.\ \cite{shi2019certiq} use an SMT (satisfiability modulo theories) solver to verify a quantum compiler via equivalence checking. These approaches do not handle general formal specifications.

\textbf{Quantum assertion checking:}
Li et al.\ \cite{DBLP:journals/pacmpl/LiZYDY020} verify assertions during quantum program run-time via projections.
Yu and Palsberg \cite{DBLP:conf/pldi/YuP21} use an abstraction to verify assertions on quantum programs with up to 300 qubits, but the approach is restricted to programs where inputs are fixed to a specific value. This is a severe drawback, as essential quantum algorithms such as teleportation, the quantum Fourier transform \cite{nielsen_chuang_2010}, or Grover's diffusion operator \cite{DBLP:conf/stoc/Grover96} require arbitrarily-valued inputs.

\medskip

In summary, despite the significance of ensuring specification compliance in quantum software engineering, there is still a lack of practical, automated tools for the purpose of symbolic quantum verification of general formal specifications.
Existing tools either:
\begin{itemize}
    \item require a high amount of manual programming, 
    \item restrict the type of quantum program, e.g., support only a subset of quantum gates or only measurement-free quantum programs,
    \item do not work symbolically, requiring to fix the inputs to the program, or
    \item do not support the checking of formal specifications written in first-order logic, which is the standard for classical software verification.
\end{itemize}

\medskip

In this paper, we introduce \tool, a framework for writing and verifying quantum programs in the quantum circuit model. 
To the best of our knowledge, \tool is the first tool that allows automated ``push-button'' verification of quantum programs where the programs are executed symbolically.
In \emph{symbolic execution}, a program is not executed with a predetermined input value. Instead, it is executed with the complete range of possible input values.
In contrast to the classical case, where the number of possible input values is bounded by the RAM architecture, the range of input values to a quantum program is infinite.

\tool's automation and high-level workflow are similar to classical verification frameworks such as CPAchecker \cite{DBLP:conf/cav/BeyerK11}:
quantum developers only need to write a quantum program (using a Cirq-like \cite{cirq_developers_2021_5182845} syntax) and a first-order logic specification that expresses the desired program output. 
Then, compliance with this specification is automatically verified based on SMT technology.
% (we use the solver dReal \cite{DBLP:conf/cade/GaoKC13}).
If the quantum program does not satisfy the specification, the user obtains a counterexample that aids in locating errors in the program.

A major obstacle in practice is that quantum program simulators require exponential memory in the number of qubits.
This is because simulators running on classical computers need to utilize a matrix to represent the state of a quantum mechanical system.
This matrix doubles in size with every qubit that is added to the computation \cite{nielsen_chuang_2010}, which naturally carries over to verifying quantum programs.
We show that in many practical cases this exponential matrix representation can be avoided.
In addition, we propose an \emph{abstraction} (or \emph{over-approximation}) \cite{CousotC77} that makes our technique more scalable without harming verification soundness.

We evaluate our approach \tool on essential quantum algorithms and subroutines. These include teleportation, QFT, \cite{nielsen_chuang_2010}, Grover's diffusion operator \cite{DBLP:conf/stoc/Grover96}, and quantum phase estimation \cite{DBLP:journals/siamcomp/Shor97}. We demonstrate that \tool efficiently verifies quantum programs with up to 24 symbolic input qubits (a $2^{24}$-dimensional state space), showing its potential to be used as a general-purpose verifier by developers of quantum programs. To put this number into perspective: state-of-the-art quantum computers currently offer one error-corrected qubit \cite{krinner2022realizing}.

The main contributions of this paper can be summarized as follows.
\textbf{First}, we introduce a symbolic quantum program model to express quantum programs and safety specifications in our verification framework. \textbf{Second}, we provide an encoding of the quantum program model in SMT and show that this encoding is sound and complete. We use this encoding to automatically verify formal specifications written in first-order logic. 
\textbf{Third}, we introduce a sound abstraction technique, which improves the verification time by one order of magnitude. 
\textbf{Finally}, we evaluate our implementation \tool on several quantum programs with up to 24 qubits.

\section{Background}\label{sec:background}

This section briefly introduces the concepts of quantum computing used in this paper. For detailed explanations, we refer to Nielsen and Chuang \cite{nielsen_chuang_2010}.

The \textit{qubit} is the basic unit of quantum information.
A single qubit can be in the \emph{ground state} $\ket{0}$ (``ket zero'') or in the \emph{excited state} $\ket{1}$ (``ket one''). 
In general, however, a qubit is in a superposition of both \emph{computational basis states}, written as $\ket{q} = \alpha \ket{0} + \beta \ket{1}$.
The \emph{amplitudes} $\alpha, \beta \in \complexes$ characterize a qubit, with $|\alpha|^2$ and $|\beta|^2$ being the probability of the qubit to be in either state. Therefore, their values are restricted such that $|\alpha|^2 + |\beta|^2 = 1$.
Qubits are often written as two-dimensional vectors:
\begin{align*}
    \ket{0} \equiv \mat{1 \\ 0}, \qquad \ket{1} \equiv \mat{0 \\ 1}, \qquad \ket{q} \equiv \mat{\alpha \\ \beta}.
\end{align*}

The qubit states span a two-dimensional Hilbert space $\hilbert_2 = \{\alpha \ket{0} + \beta\ket{1}\}$, a complete complex vector space where the inner product is defined.
When we combine $n$ qubits, the system's state vector $\ket{\psi}$ spans the \emph{tensor product} of Hilbert spaces $\hilbert_{2^n} = \bigkron_{i=1}^n \hilbert_2^{(i)}$, and $\ket{\psi}$ is a $2^n$-dimensional vector.
\smallskip

\textit{Quantum logic gates} are the building blocks of quantum programs and transform a quantum state into a new quantum state. They are characterized by unitary matrices $U$ that transform quantum state vectors. %
Common quantum gates, shown in \Cref{fig:example-gates}, include $X$ (\textsc{Not}), $Z$ (phase-flip), $H$ (Hadamard), $U_{CX}$ (controlled-\textsc{Not}), and $U_{CZ}$ (controlled phase-flip). 
\begin{figure}[H]
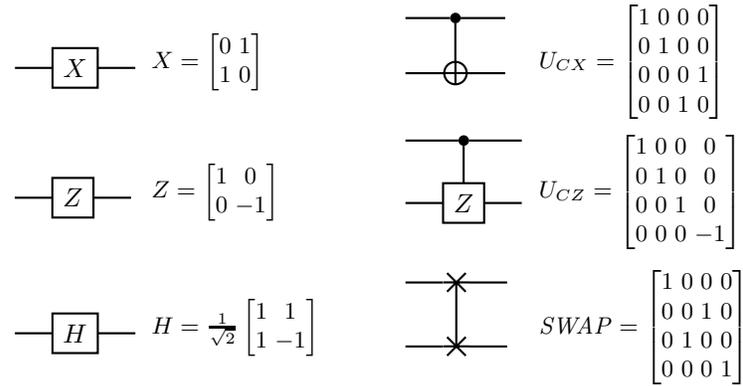

    \centering
    \vspace{-1em}
    \begin{tabular}{ll@{\qquad}ll}
        \usebox\singlegateX
        &
        $X = \mat{0 & 1 \\ 1 & 0}$ 
        &
        \usebox\cxgate
        &
        $U_{CX} = \mat{
            1 & 0 & 0 & 0 \\
            0 & 1 & 0 & 0 \\
            0 & 0 & 0 & 1 \\
            0 & 0 & 1 & 0}$
        \\
        \usebox\singlegateZ
        &
        $Z = \mat{1 & 0 \\ 0 & -1}$
        &
        \usebox\czgate
        &
        $U_{CZ} = \mat{
            1 & 0 & 0 & 0 \\
            0 & 1 & 0 & 0 \\
            0 & 0 & 1 & 0 \\ 
            0 & 0 & 0 & -1}$
        \\
        \usebox\singlegateH 
        &
        $H = \frac{1}{\sqrt{2}} \mat{1 & 1 \\ 1 & -1}$
        &
        \usebox\swapgate
        &
        $\mathit{SWAP} = \mat{
            1 & 0 & 0 & 0 \\
            0 & 0 & 1 & 0 \\
            0 & 1 & 0 & 0 \\ 
            0 & 0 & 0 & 1}$
    \end{tabular}
    \caption{Circuit diagrams and matrices of some common quantum gates. For the controlled gates $U_{CX}$ and $U_{CZ}$, the dot ($\bullet$) marks the control qubit.}
    \label{fig:example-gates}
\end{figure}
\begin{wrapfigure}[13]{r}{0.43\textwidth}
    \vspace{-2.6em}
    \centering
	\scalebox{0.87}{
		\begin{tikzpicture}[scale=1.0,a/.style={->,thick,>=stealth}]
		% Define radius
		\def\r{2}
		\def\rr{2.4}
		
		% Bloch vector
		\draw[thick] (\r/2,\r/3) node[circle,fill,inner sep=1,label=above:$\Large{\ket{q}}$] (psi) {};
		\draw (0,0) node[circle,fill,inner sep=1] (orig) {} -- (psi);
		\draw[dashed,thick] (orig) -- (\r/2,-\r/5) node (phi) {} -- (psi);
		
		% Sphere
		\draw[thick] (orig) circle (\r);
		\draw[dashed,thick] (orig) ellipse (\r{} and \r/3);
		
		% Axes
		%% X axis
		\draw[a,red] (orig) -- ++(-\rr/5,-\rr/3) node[below right] (x) {$x$};
		\draw (-\r/5,-\r/3) node[circle,fill,inner sep=1,label=below left:$\ket{+}$] (+) {};
		\draw (\r/5,\r/3) node[circle,fill,inner sep=1,label=above:$\ket{-}$] (-) {};
		
		%% Y axis
		\draw[a,blue] (orig) -- ++(\rr,0) node[right] (y) {$y$};
		\draw (\r,0) node[circle,fill,inner sep=1,label=above right:$\ket{i}$] (i) {};
		\draw (-\r,0) node[circle,fill,inner sep=1,label=left:$\ket{-i}$] (-i) {};
		
		%% Z axis
		\draw[a,forestgreen] (orig) -- ++(0,\rr) node[above] (z) {$z$};
		\draw (0,\r) node[circle,fill,inner sep=1,label=above right:$\ket{0}$] (0) {};
		\draw (0,-\r) node[circle,fill,inner sep=1,label=below:$\ket{1}$] (1) {};
		
		%Angles
		\pic [draw=gray,text=gray,->,"$\phi$"] {angle = x--orig--phi};
		\pic [draw=gray,text=gray,<-,"$\theta$"] {angle = psi--orig--z};
		\end{tikzpicture}}
	\vspace{-1.3em}
	\caption{A qubit $\ket{q}$ visualized on the Bloch sphere.}
	\label{fig:qubit}
\end{wrapfigure}
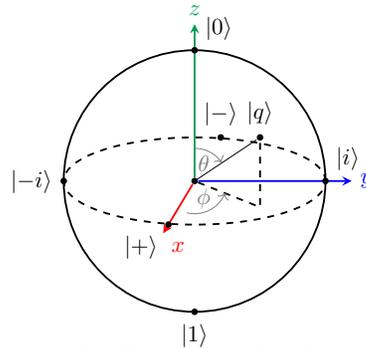
The state of a qubit can alternatively be described with polar coordinates,
$$\ket{q} = \cos \frac{\theta}{2} \ket{0} + e^{i \phi} \sin \frac{\theta}{2} \ket{1},$$
where $\phi$ and $\theta$ correspond to angles that describe a point on the unit sphere, known as the \textit{Bloch sphere} (see \Cref{fig:qubit}), with $\ket{0}$ being the north pole and $\ket{1}$ being the south pole.
For instance, the gates $X$ and $Z$ perform a 180° rotation around the $x$ and $z$ axes, respectively, while $H$ maps ground state $\ket{0}$ to $\ket{+} = \frac{1}{\sqrt{2}}(\ket{0} + \ket{1})$ at the equator.

\subsection{Entanglement}\label{subsec:entanglement}

\emph{Quantum entanglement} is an important concept of quantum mechanics. It occurs if the state of one qubit cannot be characterized independently of the state of another qubit, including when the qubits are separated over a large distance.
Two-qubit states with perfect correlation are called the \textit{Bell states}. An example for such a state is $
    \ket{\phi^+} = \frac{1}{\sqrt{2}}\left(\ket{0} \otimes \ket{0} + \ket{1} \otimes \ket{1}\right),
$ where the first and second qubit are always guaranteed to be either both 0 or both 1 after measurement.

\subsection{Quantum Measurement}\label{subsec:measurement}

Measuring a single qubit $\ket{\psi} = \alpha \ket{0} + \beta \ket{1}$ converts it into a classical bit: $0$ with probability $|\alpha|^2$ and $1$ with probability $|\beta|^2$.
In circuit notation, a measurement is denoted as
\hspace*{-4mm}\begin{quantikz}[baseline=-2mm]\lstick{} & \meter{$M$} & \cw \end{quantikz} (the double stroke indicates a \emph{classical} wire).
Because there are two statistical outcomes, 0 and 1, there exists one measurement operator (a non-unitary matrix) for each: $
    M_0 = \mat{1 & 0 \\ 0 & 0} \text{ and } M_1 = \mat{0 & 0 \\ 0 & 1}
$. 
The measurement operators irreversibly change the quantum state, which influences subsequent computations.
Because of the statistical nature of quantum measurement, simulation tools (and also \tool) need to branch out into two execution paths, with a probability value associated with each of the paths.

\subsection{Running Example: Teleportation}\label{subsec:tp}
Quantum teleportation (TP) is an example of a quantum program with \emph{symbolic} inputs; here, Alice wants to send a qubit $\ket{\psi}$ to Bob.
There exists no quantum communication channel in this problem setting, but Alice and Bob each have one qubit of an entangled qubit pair $\ket{\phi^+}$.
This is used to send (teleport) Alice's qubit to Bob: First, Alice uses a \textit{CNOT} and \textit{H} gate to entangle her two qubits with each other. Then, after measuring both, she sends the measurement results via a classical communication channel to Bob, who finally retrieves $\ket{\psi}$ using two controlled gates, $U_{CX}$ and $U_{CZ}$.
The circuit diagram is shown in \Cref{fig:tp}.

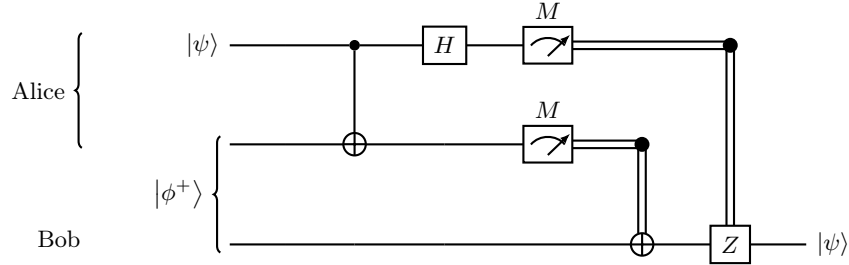
\begin{figure}[h]
    \centering
    \scalebox{1.0}{
    % Only for labeling
    \begin{quantikz}
    \lstick{} \\[-1em]
    \lstick[wires=2]{Alice}
    & \\[2.5em]
    & \\[2.5em]
    \lstick{Bob}
    &  
    \end{quantikz}
    %
    % Main circuit
    \begin{quantikz}[column sep=2.3em]
    \lstick{$\ket{\psi}$}               
    & \qw 
    & \ctrl{1} 
    & \gate{H}  
    & \meter{$M$} 
    & \cw      
    & \cwbend{2} \\
    \lstick[wires=2]{$\ket{\phi^+}$}
    & \qw                   
    & \targ{}                           
    & \qw                               
    & \meter{$M$}                         
    & \cwbend{1}    
    & \\[1em]
    \qw                                 
    & \qw                   
    & \qw                               
    & \qw                               
    & \qw                                   
    & \targ{} & \gate{Z}    
    & \qw \rstick{\ket{\psi}}
    \end{quantikz}
    }
    \caption{Quantum teleportation circuit, adapted from \cite{nielsen_chuang_2010}. The double line indicates a classical wire. Here, it simulates a communication channel.}
    \label{fig:tp}
\end{figure}

This example motivates the importance of \emph{symbolic verification}: we want to verify that teleportation is successful for \emph{any} quantum state and, hence, need to represent the input state symbolically.

\section{The \tool Quantum Program Model}\label{sec:modeling-quantum-programs}

We introduce the \emph{quantum program model} $\qmodel$ as an SMT-compatible symbolic representation of the general quantum circuit model \cite{nielsen_chuang_2010}.
The quantum program model, unlike the standard state-vector representation used in simulators, can represent operations on qubits as direct mappings in SMT instead of matrices.
Only when necessary, for example when qubits become entangled, do we construct the state vector for this specific subset of qubits.

The main benefit of the quantum program model is that it allows reasoning about quantum programs whose inputs are symbolic and therefore not fixed to a certain value. 
Thus we can use the model to perform formal verification against all possible inputs, i.e., the entire infinite Hilbert space.
Furthermore, the quantum program model allows us to handle quantum programs with parametrized gates, which add another (infinite) dimension to the problem.
\smallskip

We give a high-level, bottom-up presentation of the quantum program model.
At the end of the presentation we exemplify the encoding of the quantum teleportation program in \Cref{sec:example_encoding_teleportation} (the complete SMT formula is shown in \Cref{sec:complete_smt_encoding_teleportation}).
First, we need symbolic encodings for qubits, computations, and measurements.
For convenience, we encode both the amplitudes and the phases into the qubit's SMT representation, allowing computations to work on either.

\begin{myenc}[Qubit]\label{enc:modeling-of-qubits}
    We encode a complex number as a pair $z \defeq (z_R, z_I)$ with $z_R, z_I \in \mathbb{R}$.
    Using this representation, we encode a qubit as a 4-tuple\footnote{We choose $\alpha$ to be real because the global phase \cite{nielsen_chuang_2010} has no observable consequences.}
    \begin{align*}
        \ket{q} &\defeq (\alpha, \beta, \phi, \theta), \qquad \alpha, \phi, \theta \in \mathbb{R}, \ \beta \in \complexes.
    \end{align*}
    We combine both the amplitude and phase representation because we need to restrict the valuations of the variables using the following constraints:
    \begin{align}
        \alpha &= \cos \frac{\theta}{2} \ \land\ 
        \beta_R = \cos \phi \cdot \sin \frac{\theta}{2} \ \land\
        \beta_I = \sin \phi \cdot \sin \frac{\theta}{2}, \label{eq:qubit-degrees-of-freedom}
    \end{align}
    which constrains the qubit's degrees of freedom to $|\alpha|^2 + |\beta|^2 = 1$, and
    \begin{align}
        0 \leq \theta \leq \pi \spaceLand 0 \leq \phi < 2\pi  \quad 
        \land \quad \theta = 0 \Rightarrow \phi = 0  \quad 
        \land \quad \theta = \pi \Rightarrow \phi = 0, \label{eq:qubit-periods}
    \end{align}
    which constrains the angles' values to their respective periods.
\end{myenc}

\Cref{enc:modeling-of-qubits} constrains a qubit's degree of freedom via its phases (\Cref{eq:qubit-periods}). This is because directly encoding the sphere equation $|\alpha|^2 + |\beta|^2 = 1$ requires two nested square operations, which are challenging for state-of-the-art SMT solvers (we evaluated Z3~\cite{DBLP:conf/tacas/MouraB08} and dReal~\cite{DBLP:conf/cade/GaoKC13}).

The main motivation for our quantum program model is that we are often not required to build the whole ($2^n$-dimensional) state vector.
Standard (unitary) quantum gates can be conveniently realized by a direct mapping on the SMT level, which we first define in an abstract way and instantiate later:

\begin{mydef}[Direct mapping]\label{enc:mapping}
    We encode a unitary gate as a bijection $U: \hilbert_2^k \to \hilbert_2^k$ called \emph{direct mapping}, where $k$ is the number of modified qubits.
    % TODO: remark: this is k-tuple to k-tuple
\end{mydef}

Direct mappings allow us to express the effect of a quantum gate without explicitly constructing the matrix representation, unlike in standard quantum simulators.
We concretize the notion of the direct mapping (\Cref{enc:mapping}) with the following encodings of the most common quantum logic gates \cite{nielsen_chuang_2010}:% TODO: maybe Clifford+T gate set

\begin{myenc2}[Basic single-qubit gates]\label{enc:basic-single-qubit-gates}
    The identity, $X$, $Z$, and $H$ gates are encoded as the following mappings:
    \begin{align*}
        I \left( \mat{\alpha \\ \beta} \right) \defeq \mat{\alpha \\ \beta}\!, \, 
        X \left( \mat{\alpha \\ \beta} \right) \defeq \mat{\beta \\ \alpha}\!, \, 
        Z \left( \mat{\alpha \\ \beta} \right) \defeq \mat{\alpha \\ -\beta}\!, \,
        H \left( \mat{\alpha \\ \beta} \right) \defeq \mat{\frac{\alpha + \beta}{\sqrt{2}} \\  \frac{\alpha - \beta}{\sqrt{2}}}\!. 
    \end{align*}
    
    We extend the encoding of the identity gate to take a variable number of arguments, such that $I(\ket{q_0}, \dots, \ket{q_k}) = (\ket{q_0}, \dots, \ket{q_k})$ for any $k$.
\end{myenc2}

The gates in \Cref{enc:basic-single-qubit-gates} are used to modify the amplitudes of a qubit. The next encoding includes gates that modify a qubit's phases without directly affecting its amplitudes.

\begin{myenc2}[Phase gates]
    The phase gates $R_X$ and $R_Z$ perform parame\-trized rotations around the $x$ and $z$ axes, respectively. The mappings use the phase angles:
    \begin{align*}
        R_X(\theta')(\phi, \theta) \defeq (\phi, \theta + \theta'), \qquad R_Z(\phi')(\phi, \theta) \defeq (\phi + \phi', \theta).
    \end{align*}
\end{myenc2}

\begin{myenc2}[\textit{SWAP} gate]\label{enc:swap-gate}
The mapping of the $\SWAP$ gate applied to qubits $\ket{q_0}$ and $\ket{q_1}$ is
    \begin{align*}
        \SWAP(\ket{q_0}\!, \ket{q_1}) \defeq (\ket{q_1}, \ket{q_0}).
    \end{align*}
\end{myenc2}

In cases where it is not possible to express a quantum gate as a unitary mapping, such as entangling gates, we resort to the standard matrix representation.
The matrix is then applied to a quantum state vector via matrix multiplication.

\addtocounter{myenc}{1}
\begin{myenc}[Gate matrix]\label{enc:gate-matrix}
    We encode a quantum gate as a $2^k \times 2^k$ (complex) matrix $U$, where $k$ is the number of modified qubits.
    We further require that $U$ is reversible (cf.\ \Cref{sec:background}).
    %, i.e., $U^\dagger U = U U^\dagger = I$.
\end{myenc}

\begin{myenc}[Matrix multiplication]\label{enc:matrix-multiplication}
    For an $m \times n$ matrix $A$ and an $n \times p$ matrix $B$, the result of the matrix multiplication $A \cdot B$, an $m \times p$ matrix $C$, is encoded via the identities $\bigwedge_{i = 1}^m \bigwedge_{j = 1}^p c_{i,j} = \sum_{k=1}^n a_{i,k} b_{k,j}$.
\end{myenc}

There are benefits when encoding a gate via a direct mapping instead of a matrix, which we now illustrate with an example: 

\begin{example}\label{ex:swap}
    Recall that the \textit{SWAP} gate can be encoded via a direct mapping (\Cref{enc:swap-gate}), i.e., we can compute 
    \begin{align*}
        \SWAP(\ket{q_0}\!, \ket{q_1}) = (\ket{q_1}\!, \ket{q_0})
    \end{align*}
    in one step. This is \emph{not} the case for the matrix encoding:
    \begin{align*}
        \SWAP (\ket{q_0} \otimes \ket{q_1})
        &= \mat{
            1 & 0 & 0 & 0 \\
            0 & 0 & 1 & 0 \\
            0 & 1 & 0 & 0 \\ 
            0 & 0 & 0 & 1} \mat{\alpha_0 \\ \beta_0} \otimes \mat{\alpha_1 \\ \beta_1} 
        = \mat{
            1 & 0 & 0 & 0 \\
            0 & 0 & 1 & 0 \\
            0 & 1 & 0 & 0 \\ 
            0 & 0 & 0 & 1} \mat{\alpha_0 \alpha_1 \\ \alpha_0 \beta_1 \\ \beta_0 \alpha_1 \\ \beta_0 \beta_1} 
        = \mat{\alpha_0 \alpha_1  \\ \beta_0 \alpha_1 \\ \alpha_0 \beta_1 \\ \beta_0 \beta_1} \\[2mm]
        &= \ket{q_1} \otimes \ket{q_0}.
    \end{align*}
    Here we observe that the matrix representation is verbose. It %additionally 
    needs 4 multiplications per tensor product and 16 multiplications only for computing the result of the matrix multiplication. Note that the number of operations increases exponentially with the number of qubits, illustrating the benefit of the direct mapping.
    We give a further example of a direct mapping in Appendix~\ref{app:example}.
\end{example}

Measurement, the only non-reversible operation in our encodings, assigns $0$ or $1$ to a qubit with a certain probability.
For a state $s$ consisting of a single qubit $\ket{q} = \alpha \ket{0} + \beta \ket{1}$, there are two possible subsequent states: $s'(0) = \ket{0}$ and $s'(1) = \ket{1}$. The probabilities $p(x)$ that state $x$ occurs are
\begin{align*}
    p(0) = |\alpha|^2,\ p(1) = |\beta|^2.
\end{align*}
Therefore, for every quantum measurement taking place in $\qmodel$, in the case of non-zero probabilities $p(0)$ and $p(1)$, there are two possible successor states, one per measurement outcome.

\begin{myenc}[Quantum measurement]\label{enc:quantum-measurement}
    We encode the measurement operators by applying the standard measurement matrices (cf.\  \Cref{sec:background}) to \Crefrange{enc:gate-matrix}{enc:matrix-multiplication}.
\end{myenc}

For entangled quantum states, qubits can no longer be characterized individually \cite{nielsen_chuang_2010}. Therefore, our encoding cannot use the direct-mapping strategy from \Cref{enc:mapping} and we fall back to a vector representation of the quantum state.

\begin{mydef}[Modeling a quantum state]\label{enc:modeling-a-quantum-state}
    We define a vector data structure to represent an $n$-qubit quantum state $\ket{\psi}$. 
    This structure holds (cf.\ \Cref{sec:background}) $2^n$ (symbolic) complex numbers
    \begin{align*}
        \ket{\psi} &\defeq (\alpha_1, \alpha_2, \cdots, \alpha_{2^n}).
    \end{align*}
\end{mydef}

\begin{myenc}[Tensor product of matrices]\label{enc:tensor-product}
    For an $m \times n$ matrix $A$ and a $p \times q$ matrix $B$, the tensor product $A \otimes B$, an $(mp) \times (nq)$ matrix $C$, is encoded via equalities $\bigwedge_{i=1}^{m} \bigwedge_{k=1}^{p} \bigwedge_{j=1}^{n} \bigwedge_{l=1}^{q} c_{ik, jl} = a_{i,j} \cdot b_{k,l}$.
\end{myenc}

The following encoding is needed for gate matrices that only apply to a subset of the qubits in the system. This is achieved by taking a tensor product with the identity matrix $I$.

\begin{myenc}[Applying gates to a subset of qubits]\label{enc:gates-subset-of-qubits}
    For a quantum state $\ket{\psi}$ over $n + 1$ qubits and a quantum gate $U$ over qubits $\ket{q_i}$ to $\ket{q_j}$ where $0 \leq i < j \leq n$,
    the next state is 
    \begin{align*}
        \ket{\psi'} &= \begin{cases}
        I^{\otimes i - 1} \otimes U \otimes I^{\otimes n-j} \ket{\psi}  & \mathrm{if} \ 0 < i, j < n, \\
        U \otimes I^{\otimes n-j} \ket{\psi}                            & \mathrm{if} \ 0 = i, j < n, \\
        I^{\otimes i - 1} \otimes U \ket{\psi}                          & \mathrm{if} \ 0 < i, j = n, \\
        U \ket{\psi}                                                    & \mathrm{if} \ 0 = i, j = n.
        \end{cases}
    \end{align*}
\end{myenc}

Having assigned a logic representation to qubits, quantum gates, and quantum measurement, we can combine them to define the \emph{quantum program model}.

\begin{mydef}[Quantum program model]\label{def:quantum-program-model}
    A \emph{quantum program model} is a 5-tuple
    \begin{align}
        \qmodel \defeq (\mathcal{Q}, S, \to, \Theta, \valuation)
    \end{align}
    where 
    \begin{itemize}
        \item $\mathcal{Q}$ is a set of $n$ (symbolic) qubits $\{\ket{q_0}, \dots, \ket{q_{n-1}}\}$,
        \item $S$ is a sequence of $m$ (symbolic) states $(s_{0}, \dots, s_{m-1})$,
        \item $\to$ is a sequence of $m-1$ state operations $(\to_1, \linebreak[1] \dots, \to_{m-1})$,
        \item $\Theta$ is a set of (symbolic) parameters, and
        \item $\valuation$ is the qubit initializer sequence.
    \end{itemize}
\end{mydef}

The qubits of $\mathcal{Q}$ are symbolic unless an initial valuation (assignment of a subset of qubits with concrete values) is provided in $\valuation$.
The initial state is $s_0 = (\ket{q_{0,0}}, \dots, \ket{q_{0,n-1}})$ and all following states
$s_i \in S$ ($0 < i < m$) again consist of symbolic qubits $(\ket{q_{i,0}}, \dots, \ket{q_{i,n-1}})$.
Every state operation $\to_i$ is either
\begin{itemize}
    \item a direct mapping (\Cref{enc:mapping}); or
    \item a unitary matrix (\Cref{enc:gate-matrix}); or 
    \item a quantum measurement (\Cref{enc:quantum-measurement}).
\end{itemize}

We define the shorthand
\begin{align*}
    &s_{i-1} \to_i s_i =\begin{cases}
    \to_i(s_{i-1}) = s_i & \quad  \to_i \text{is a direct mapping}, \\
    \to_i \cdot \bigkron_{j = 0}^{n-1} \ket{q_{(i-1,j)}} = \bigkron_{j = 0}^{n-1} \ket{q_{(i,j)}} & \quad \to_i \text{is a matrix},
    \end{cases}
\end{align*}

and tie the states and operations together via $
    \bigwedge_{i=1}^{m-1} s_{i-1} \to_i s_i.
$

A state operation can also be a quantum measurement $M$.
When state $s_{i-1}$ is measured, two possible subsequent states are created: $s_{i}(0)$ and $s_{i}(1)$ (\Cref{subsec:measurement}).
Additionally, we allow measurement of $k$ qubits at the same time
for a bit vector $x \in \{0, 1\}^k$ such that $M_x$ is the combined measurement.

The set $\Theta$ contains symbolic, real-valued variables that are used to parameterize state operations, e.g., rotations.
The sequence
$
    \valuation = \left( \Psi_0, \dots, \Psi_{n-1} \right)
$
contains sets of initial valuations $\Psi_i \subseteq \hilbert_2$ (possibly  singleton sets in case of a concrete valuation).
The initial valuations are asserted to the initial qubits via
$
    \bigwedge_{i = 0}^{n-1} \ket{q_i} \in \Psi_i.
$

\medskip 

Before we give an example, we note that the quantum program model $\qmodel$ is equivalent to the traditional presentation of quantum computing.

\begin{theorem}[Equivalence]\label{th:equivalence}
    The quantum program model $\qmodel$ (\Cref{def:quantum-program-model}) and the quantum circuit model \cite{nielsen_chuang_2010} are equivalent.
\end{theorem}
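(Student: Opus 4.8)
The plan is to establish the equivalence by exhibiting a semantics-preserving translation in both directions between quantum circuits and quantum program models, and then arguing that the induced observable behaviors — the final quantum state and the measurement-outcome distributions — coincide. Since the standard circuit model identifies pure states only up to global phase, and $\qmodel$ deliberately fixes $\alpha$ to be real for exactly this reason (cf.\ the footnote to \Cref{enc:modeling-of-qubits}), the equivalence is to be understood up to global phase throughout.

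First I would prove a qubit-level correspondence: the constrained 4-tuple $(\alpha,\beta,\phi,\theta)$ of \Cref{enc:modeling-of-qubits} is in bijection with the pure states of $\hilbert_2$ up to global phase. Constraint \eqref{eq:qubit-degrees-of-freedom} is precisely the polar parametrization $\ket{q}=\cos\frac{\theta}{2}\ket{0}+e^{i\phi}\sin\frac{\theta}{2}\ket{1}$, which covers the entire Bloch sphere, while constraint \eqref{eq:qubit-periods} pins the angles to a canonical period and removes the phase degeneracy at the poles, making the denotation unique. Together these force $|\alpha|^2+|\beta|^2=1$, so each admissible tuple denotes a legal qubit and conversely. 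Extending this qubit-wise, a product of tuples corresponds to a separable state in $\hilbert_{2^n}$, and \Cref{enc:modeling-a-quantum-state} accounts for the general (possibly entangled) vector.

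Next comes the operation-level correspondence, carried out per operation type. For the direct mappings (\Cref{enc:mapping}, instantiated in \Cref{enc:basic-single-qubit-gates} through \Cref{enc:swap-gate}) the key lemma is that, on a separable state, applying the mapping to the affected qubits yields the same tuple as multiplying the corresponding unitary of \Cref{fig:example-gates} by the qubit vector; this reduces to a short per-gate check (e.g.\ $X(\alpha,\beta)=(\beta,\alpha)$ matches $X\cdot[\alpha,\beta]^{\mathsf T}$). For matrix gates, \Cref{enc:gate-matrix,enc:matrix-multiplication} reproduce the standard matrix action verbatim, and \Cref{enc:gates-subset-of-qubits,enc:tensor-product} reproduce the standard ``tensor with identity'' embedding of a local gate into the full space, so these agree by construction. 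Measurement (\Cref{enc:quantum-measurement}) is handled by checking that applying $M_0,M_1$ produces the two successor states with the claimed probabilities $|\alpha|^2,|\beta|^2$, matching the projective-measurement postulate. Finally, sequential composition in a circuit corresponds to the chaining $\bigwedge_{i=1}^{m-1} s_{i-1}\to_i s_i$ and input preparation corresponds to $\valuation$, yielding the forward translation; reading the same equalities in reverse gives the backward translation.

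The main obstacle I expect is soundness of the \emph{hybrid} representation: $\qmodel$ keeps qubits in factored tuple form and falls back to the $2^n$-dimensional vector of \Cref{enc:modeling-a-quantum-state} only when gates entangle them. I therefore need an invariant showing that, as long as the global state is separable, the direct-mapping tuples faithfully track the true state vector; that a single-qubit (or $\SWAP$) mapping preserves separability and agrees with the corresponding $I^{\otimes\cdots}\otimes U\otimes I^{\otimes\cdots}$ action on the tensored vector; and that the switch to the vector representation at an entangling gate reconstructs exactly the tensor product of the current tuples before applying the matrix. Establishing this consistency at the representation boundary — rather than the individual per-gate verifications, which are routine — is the crux of the argument.
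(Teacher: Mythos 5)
Your proposal is correct and follows the same overall skeleton as the paper's proof: a qubit-level correspondence justified by the polar parametrization and the period constraints, a per-operation-type case analysis (direct mapping vs.\ matrix vs.\ measurement), and an argument that sequential composition of operations carries over. The differences are in the organization and in one substantive point. The paper packages composition as an explicit induction on the number of state operations (base case: a two-state model with a single operation; step: merging a one-operation model onto the output of an $m$-operation model), whereas you treat composition as immediate from the chaining conjunction $\bigwedge_i s_{i-1}\to_i s_i$ --- essentially the same content, differently phrased. For measurement, the paper invokes the principle of deferred measurement to push all measurements to the end of the circuit (assuming each qubit is measured at most once) and then reuses the matrix-operation equivalence with the non-unitary measurement matrices; you instead verify the measurement postulate locally at each branching point, which is equally valid and avoids the deferral assumption. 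The most notable divergence is that you identify the consistency of the hybrid tuple/vector representation --- separability preservation under direct mappings, agreement with the $I^{\otimes\cdots}\otimes U\otimes I^{\otimes\cdots}$ action, and faithful reconstruction of the tensor product at the switch to a state vector --- as the crux. The paper's proof does not address this boundary explicitly; it simply notes that direct mappings are bijections expressible as unitaries and moves on. Your proposed invariant is precisely what would be needed to make that step fully rigorous, so on this point your outline is more demanding than the paper's own argument; carrying it out would strengthen rather than merely reproduce the published proof.
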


The proof for \Cref{th:equivalence} is given in Appendix~\ref{app:proof}.

\subsection{Running Example: Quantum Program Model of Teleportation}\label{sec:example_encoding_teleportation}
Now that we have defined the quantum program model, we formalize our running example, teleportation, as
$\qmodel = (\mathcal{Q}, S, \to, \varnothing, \valuation)$, where
\begin{align*}
    \mathcal{Q} &= \{ \ket{q_0}\!, \ket{q_1}\!, \ket{q_2} \}, \\
    S &= ( s_0, s_1, s_2, s_3, s_4 ), \\
    \to &= (U_{C\!X}(\ket{q_0}\!\!, \ket{q_1}), 
    H(\ket{q_0}), 
    \mathcal{M}(\ket{q_0}\!\!, \ket{q_1}), 
    U_{C\!X}(\ket{q_1}\!\!, \ket{q_2}), 
    U_{C\!Z}(\ket{q_0}\!\!, \ket{q_2}), \\
    \valuation &= (\hilbert_2, \{\ket{\phi^+}\}).
\end{align*}

Note that valuations $\valuation$ are symbolic, so each input qubit can assume any state in the Hilbert space.

\medskip

Next we provide a high-level encoding of this quantum program model in SMT.
The complete SMT formula is shown in \Cref{sec:complete_smt_encoding_teleportation}.

We begin by encoding the first state $s_0$, which contains the three input qubits $\ket{q_{0,0}}\!, \ket{q_{0,1}}\!, \ket{q_{0,2}}$. The first operation $s_0 \to_1 s_1$ is encoded as $\ket{q_{1,0}} \otimes \ket{q_{1,1}} = U_{CX} \ket{q_{0,0}} \otimes \ket{q_{0,1}}$, with $s_1$ containing the qubits $\ket{q_{1,0}}\!, \ket{q_{1,1}}\!, \ket{q_{1,2}}$ that encode the result of this operation.
The remaining states and state operations are encoded as follows (we have omitted identity operations for the sake of brevity), with all entries connected with a conjunction:
\begin{center}
\small
\begin{tabularx}{1.1\linewidth}{@{}lX@{}} \toprule
    \textbf{State} & \textbf{Operation} \\  \midrule
    $s_2 = (\ket{q_{2,0}}\!, \ket{q_{2,1}}\!, \ket{q_{2,2}})$ & $\ket{q_{2,0}} = H \ket{q_{1,0}}$ \\
    $s_3(00) = (\ket{q_{3,0}(00)}\!, \ket{q_{3,1}(00)}\!, \ket{q_{3,2}(00)})$
    & $\ket{q_{3,0}(00)} = M_0 \ket{q_{2,0}}\!, \ \ket{q_{3,1}(00)} = M_0 \ket{q_{2,1}}$ \\
    $s_3(01) = (\ket{q_{3,0}(01)}\!, \ket{q_{3,1}(01)}\!, \ket{q_{3,2}(01)})$
    & $\ket{q_{3,0}(01)} = M_0 \ket{q_{2,0}}\!, \ \ket{q_{3,1}(01)} = M_1 \ket{q_{2,1}}$ \\
    $s_3(10) = (\ket{q_{3,0}(10)}\!, \ket{q_{3,1}(10)}\!, \ket{q_{3,2}(10)})$
    & $\ket{q_{3,0}(10)} = M_1 \ket{q_{2,0}}\!, \ \ket{q_{3,1}(10)} = M_0 \ket{q_{2,1}}$ \\
    $s_3(11) = (\ket{q_{3,0}(11)}\!, \ket{q_{3,1}(11)}\!, \ket{q_{3,2}(11)})$ 
    & $\ket{q_{3,0}(11)} = M_1 \ket{q_{2,0}}\!, \ \ket{q_{3,1}(11)} = M_1 \ket{q_{2,1}}$ \\
    $s_4(x) = (\ket{q_{4,0}(x)}\!, \ket{q_{4,1}(x)}\!, \ket{q_{4,2}(x)})$ 
    & $\ket{q_{4,1}(x)} \otimes \ket{q_{4,2}(x)} = U_{CX} \ket{q_{3,1}(x)} \otimes \ket{q_{3,2}(x)}$ \\
    \hspace{1em}$(x \in \{00, 01, 10, 11\})$ \\
    $s_5(x) = (\ket{q_{5,0}(x)}\!, \ket{q_{5,1}(x)}\!, \ket{q_{5,2}(x)})$
    & $\ket{q_{5,0}(x)} \otimes \ket{q_{5,2}(x)} = U_{CZ} \ket{q_{4,0}(x)} \otimes \ket{q_{4,2}(x)}$ \\
    \midrule
    \textbf{Initial valuation} & $\ket{q_{0,0}} \in \hilbert_2, \ \ket{q_{0,1}} \otimes \ket{q_{0,2}} \in \{ \ket{\phi^+} \}$ \\
    \bottomrule
\end{tabularx}
\end{center}
We observe that the measurement step from $s_2$ to $s_3$ results in the creation of 4 possible execution paths, one per measurement outcome ($00, 01, 10, 11$).
Also, recall that all the symbols and operators used in the encoding above, such as the tensor product ($\otimes$), gates ($H$, $U_{CX}$, $U_{CZ}$), measurements ($M_0$, $M_1$), and Hilbert space ($\hilbert_2$), carry the meanings we assigned to them in  \Crefrange{enc:modeling-of-qubits}{enc:gates-subset-of-qubits}.

\section{The \tool Verification Algorithm}

Our \tool algorithm takes as input a quantum program model $\qmodel$ defined in \Cref{sec:modeling-quantum-programs} and a formal specification in the form of a first-order formula $\spec$.
From that, \tool generates an SMT encoding (which we also write $\qmodel$ with a slight abuse of notation) as described in the previous section.
Finally, this encoding together with the negated specification is asserted in a query to an SMT solver.

\begin{theorem}[Soundness and completeness of the encoding]\label{sound-complete}
    Given a quantum program model with encoding $\qmodel$ and a specification $\spec$, we have that the program satisfies $\spec$ if and only if $\qmodel \land \neg \spec$ is unsatisfiable.
\end{theorem}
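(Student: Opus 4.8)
The plan is to establish the biconditional by reducing it to two facts: first, that the SMT encoding $\qmodel$ faithfully captures exactly the set of input-output behaviors of the quantum program (this is essentially guaranteed by \Cref{th:equivalence}, the equivalence of the quantum program model and the standard quantum circuit model); and second, that checking specification compliance against all symbolic inputs is logically equivalent to an unsatisfiability query on the conjunction with the negated specification. I would phrase the whole argument in terms of satisfying assignments (models, in the SMT sense) of $\qmodel$.

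First I would make precise what ``the program satisfies $\spec$'' means. The encoding $\qmodel$ is a first-order formula whose free variables range over the symbolic qubit tuples of the initial state $s_0$, the intermediate states $s_1, \dots, s_{m-1}$, any symbolic parameters in $\Theta$, and the branch variables introduced by measurements (as in the teleportation table). By \Cref{th:equivalence} together with the per-operation encodings (\Crefrange{enc:modeling-of-qubits}{enc:gates-subset-of-qubits} and \Cref{enc:quantum-measurement}), a valuation $\sigma$ of these variables satisfies $\qmodel$ if and only if $\sigma$ assigns to the initial qubits a legal state in the admitted Hilbert space (respecting $\valuation$ and the qubit constraints \eqref{eq:qubit-degrees-of-freedom}--\eqref{eq:qubit-periods}) and assigns to every later state exactly the result of applying the corresponding state operation. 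In other words, the models of $\qmodel$ are in bijective correspondence with the genuine executions of the quantum program over all admissible symbolic inputs and measurement branches. I would state ``the program satisfies $\spec$'' as: for every model $\sigma$ of $\qmodel$, the final-state projection of $\sigma$ makes $\spec$ true.

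Given this reading, the biconditional becomes a standard logical manipulation. The program satisfies $\spec$ iff every model $\sigma$ of $\qmodel$ satisfies $\spec$, i.e.\ there is no model of $\qmodel$ that falsifies $\spec$, i.e.\ there is no model of $\qmodel \land \neg\spec$, which is precisely the statement that $\qmodel \land \neg\spec$ is unsatisfiable. I would write out this chain of equivalences explicitly, using the definition of $\models$ in first-order logic and the fact that $\sigma \models \neg\spec$ iff $\sigma \not\models \spec$. The soundness direction (unsat $\Rightarrow$ satisfies) and the completeness direction (satisfies $\Rightarrow$ unsat) then fall out of the two directions of this iff.

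The main obstacle is not the logical shell but justifying the faithfulness step rigorously, i.e.\ that the set of models of $\qmodel$ really coincides with the set of quantum-program executions. This requires confirming that each encoding clause is both necessary and sufficient: that the qubit constraints carve out exactly $\hilbert_2$ (no spurious solutions, no missing states), that the direct mappings and matrix multiplications (\Cref{enc:mapping}, \Crefrange{enc:gate-matrix}{enc:matrix-multiplication}) are total functions correctly chaining $s_{i-1}$ to $s_i$, and critically that the measurement branching (\Cref{enc:quantum-measurement}) enumerates all and only the reachable post-measurement states, with the branch variables ranging over exactly the outcomes of nonzero probability. I would discharge most of this by appealing to \Cref{th:equivalence} and verifying the measurement case separately, since measurement is the one non-functional, branching operation and is therefore the place where a mismatch between models and executions could hide. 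Once faithfulness is secured, the remainder is the routine quantifier-free logical equivalence above.
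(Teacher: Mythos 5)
Your proposal is correct and follows essentially the same route as the paper's own (very terse) proof: appeal to \Cref{th:equivalence} for the faithfulness of the encoding, then observe that satisfiability of $\qmodel \land \neg\spec$ is exactly the existence of a specification-violating execution. Your version simply spells out the model-theoretic details (and rightly flags measurement branching as the delicate spot) that the paper leaves implicit.
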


\begin{proof}
    This follows from the one-to-one correspondence of the quantum program model $\qmodel$ and the standard quantum circuit model \cite{nielsen_chuang_2010} shown in \Cref{th:equivalence}.
    The formula is satisfiable if and only if there is an execution that violates the specification.
\end{proof}

%Note that $\neg(\qmodel \implies \spec)$ is equivalent to $\qmodel \land \neg \spec$.
%
The formula $\qmodel$ falls into the theory of nonlinear real arithmetic with trigonometric expressions, for which checking satisfiability is
%In fact, SMT for signatures extending real arithmetic by trigonometric expressions is 
undecidable \cite{Richardson68}. Yet, the $\delta$-relaxation of this problem is decidable \cite{DBLP:conf/cade/GaoAC12}.
That is why we use the $\delta$-satisfiability framework from \cite{DBLP:conf/cade/GaoKC13}, which is implemented in dReal\footnote{Available at \url{https://github.com/dreal/dreal4}}.
If the combined formula $\qmodel \land \neg \spec$ is found to be $\delta$-$\sat$, either it is indeed satisfiable (i.e., a counterexample has been found), or it is unsatisfiable (i.e., the program complies with the specification) but a $\delta$-perturbation on its numerical terms would satisfy the formula.
The parameter $\delta$ is user-controllable, and we show in the evaluation that the $\delta$-$\sat$ case for correct programs does not occur in practice for reasonable values of $\delta$.

While the $\delta$-relaxation must sacrifice completeness, it preserves soundness: If the formula is found to be unsatisfiable ($\unsat$), then the quantum program is indeed correct with respect to $\spec$.

\begin{theorem}[Soundness preservation]
    Let $\qmodel$ be the encoding of a quantum program model and $\spec$ be a specification. Assume that a $\delta$-satisfiability solver returns $\unsat$ for the formula $\qmodel \land \neg \spec$. Then the quantum program is correct.
\end{theorem}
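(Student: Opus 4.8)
The plan is to chain two facts: the soundness of the underlying $\delta$-complete decision procedure on an $\unsat$ verdict, and the exact characterization of correctness already established in \Cref{sound-complete}.

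First I would recall the defining guarantee of a $\delta$-complete decision procedure from Gao et al.\ \cite{DBLP:conf/cade/GaoKC13}: given a formula $\psi$ and a precision $\delta > 0$, the procedure returns $\unsat$ only when $\psi$ is genuinely unsatisfiable, and returns $\delta$-$\sat$ only when the $\delta$-weakening $\psi^\delta$ is satisfiable. The key structural observation is that $\delta$-weakening merely relaxes the numerical atoms (e.g., replacing an equality $f = 0$ by $|f| \le \delta$), so that $\psi$ logically entails $\psi^\delta$; consequently, whenever $\psi^\delta$ is unsatisfiable the original $\psi$ must be as well. This is exactly why the $\unsat$ verdict carries no $\delta$-perturbation caveat: the relaxation can only ever turn an unsatisfiable formula satisfiable, never the reverse.

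Applying this with $\psi \defeq \qmodel \land \neg \spec$, the hypothesis that the solver returns $\unsat$ yields that $\qmodel \land \neg \spec$ is (exactly) unsatisfiable. I would then invoke \Cref{sound-complete}, which states that the program satisfies $\spec$ if and only if $\qmodel \land \neg \spec$ is unsatisfiable, to conclude that the program satisfies $\spec$, i.e., it is correct.

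The argument has no serious obstacle; the one point to state carefully is that the $\delta$-relaxation is applied to the full conjunction $\qmodel \land \neg \spec$ rather than to its parts, and that weakening this conjunction never shrinks its solution set and hence cannot hide a genuine counterexample. Because we rely on the $\unsat$ verdict and not on $\delta$-$\sat$, no completeness claim enters, and the conclusion holds uniformly for every choice of $\delta > 0$.
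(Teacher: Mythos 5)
Your proposal is correct and follows essentially the same route as the paper: the paper's own proof is a one-line appeal to \Cref{sound-complete} together with the $\delta$-decidability framework of Gao et al., and your argument simply makes explicit the key fact that an $\unsat$ verdict from a $\delta$-complete procedure certifies genuine unsatisfiability (since the $\delta$-weakening only enlarges the solution set), after which \Cref{sound-complete} yields correctness. Your write-up is a more detailed and careful rendering of exactly the argument the paper intends.
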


\begin{proof}
    This follows from \Cref{sound-complete} and \cite{DBLP:conf/cade/GaoKC13}.
\end{proof}

\subsection{Running Example: Verification of Teleportation}
Coming up with the right specifications for quantum programs is not trivial.
Conveniently, as \tool maps all building blocks of quantum programs into an SMT representation, we have access to the full set of logic operators.

\medskip

We want our specification to express that teleportation has been successful, i.e., qubit $\ket{q_0}$ has moved to where qubit $\ket{q_2}$ was at the beginning (compare the right-hand side of \Cref{fig:tp}).
\begin{align*}
    \left(\ket{q_{5,2}} = \ket{q_{0,0}} \right) 
\end{align*}
This, however, is not the full specification. We need to disallow operations crossing the line between the first two qubits and the last one, which only becomes possible after measurement, where the classical communication channel can be used (cf.\ \Cref{subsec:tp}).
Therefore, we add an additional constraint that forbids state operations where these qubits appear together:
\begin{align*}
    \spec &= (\ket{q_{5,2}} = \ket{q_{0,0}}) \, \land \,\, \neg \exists 0 \leq i \leq 2\!:\ \to_i\!(\ket{q_{i,0}}\!, \ket{q_{i,2}}) 
    \ \lor \to_i\!(\ket{q_{i,1}}\!, \ket{q_{i,2}})
\end{align*}

Performing the verification is ``push-button,'' i.e., only requires writing the quantum program model and the specification. The corresponding Python code given in Appendix~\ref{app:programs} demonstrates that a user does not have to provide any proof steps as in previous works based on proof assistants.

\subsection{The \tool Over-Approximation}\label{sec:over-approximation}
\begin{wrapfigure}[9]{r}{0.43\textwidth}
    \vspace*{-15mm}
	\centering
	\scalebox{0.6}{
		\begin{tikzpicture}[scale=1.0]
		% Define radius
		\def\r{2}
		\def\rr{2.4}
		
		% Bloch vector
		\draw[thick] (\r/2,\r/3) node[circle,fill,inner sep=1,label=above:$\Large{\ket{q}}$] (psi) {};
		\draw (0,0) node[circle,fill,inner sep=1] (orig) {} -- (psi);
		\draw[dashed,thick] (orig) -- (\r/2,-\r/5) node (phi) {} -- (psi);
		
		% Sphere
		\draw[thick] (orig) circle (\r);
		\draw[dashed,thick] (orig) ellipse (\r{} and \r/3);
		
		% Axes
		%% X axis
		\draw[->,red,thick] (orig) -- ++(-\rr/5-0.1,-\rr/3-0.12) node[below right] (x) {$x$};
		\draw (-\r/5,-\r/3) (+) {};
		\draw (\r/5,\r/3) (-) {};
		
		%% Y axis
		\draw[->,blue,thick] (orig) -- ++(\rr+0.3,0) node[right] (y) {$y$};
		\draw (\r,0) (i) {};
		\draw (-\r,0)  (-i) {};
		
		%% Z axis
		\draw[->,forestgreen,thick] (orig) -- ++(0,\rr-0.1) node[above] (z) {$z$};
		\draw (0,\r)  (0) {};
		\draw (0,-\r) (1) {};
		
		%Angles
		\pic [draw=gray,text=gray,->,"$\phi$"] {angle = x--orig--phi};
		\pic [draw=gray,text=gray,<-,"$\theta$"] {angle = psi--orig--z};
		
		%% Abstraction box
		\draw[dashed] (-\r,0) -- ++(-\r/5,-\r/3) -- ++(\r,0) -- ++(\r,0);
		\draw[dashed] (-\r,0) -- ++(\r/5,\r/3);
		
		\draw[dashed] (\r,0) -- ++(-\r/5,-\r/3);
		\draw[dashed] (\r,0) -- ++(\r/5,\r/3) -- ++(-\r,0) -- ++(-\r,0);
		
		\draw[-] (-\r-\r/5,-\r/3) -- ++(0,\r) -- ++(\r,0) -- ++(\r,0) -- ++(\r/5,\r/3) -- ++(\r/5,\r/3) -- ++(-\r,0) -- ++(-\r,0) -- ++(-\r/5,-\r/3) -- ++(-\r/5,-\r/3);
		
		\draw[-] (-\r-\r/5,-\r/3) -- ++(0,-\r);
		\draw[-] (\r-\r/5,-\r/3) -- ++(0,-\r);

		\draw[-] (\r-\r/5,-\r/3) -- ++(0, -\r) -- ++(\r/5,\r/3) -- ++(\r/5,\r/3) -- ++(-\r,0) -- ++(-\r,0) -- ++(-\r/5,-\r/3) -- ++(-\r/5,-\r/3) -- ++(\r, 0) -- ++(\r, 0);
		
        \draw[-] (\r-\r/5,-\r/3) -- ++(0,\r);
        \draw[-] (\r+\r/5,\r/3) -- ++(0,\r);
        \draw[-] (\r+\r/5,\r/3) -- ++(0,-\r);

        \draw[-] (-\r+\r/5,\r/3) -- ++(0,-\r);
        \draw[-] (-\r+\r/5,\r/3) -- ++(0,\r);

		\end{tikzpicture}}
	\caption{The over-approximation visualized for a single qubit.}
	\label{fig:qubit-overapprox}
\end{wrapfigure}
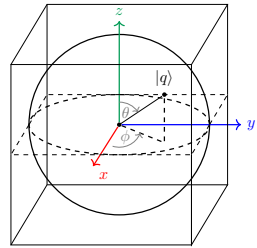

\Cref{enc:modeling-of-qubits} puts trigonometric functions into the SMT formula, which are computationally expensive. 
This can also be later seen in the evaluation.
Therefore, we introduce an over-approximation of the Hilbert space to make the verification task more efficient.
This is achieved via relaxing the qubit's degrees of freedom from the unit sphere to the unit box, visualized in \Cref{fig:qubit-overapprox}.

\begin{myenc}[Over-approximation]\label{enc:qubit-over-approx}
    We remove the constraints in \Cref{eq:qubit-periods} from \Cref{enc:modeling-of-qubits} and add the following constraint over the qubit's degrees of freedom:
    \begin{align}
        -1 \leq \alpha \leq 1 \ \spaceLand \ -1 \leq \beta_R \leq 1 \ \spaceLand \ -1 \leq \beta_I \leq 1.
    \end{align}
\end{myenc}

\section{Evaluation}

This section presents our experimental evaluation, demonstrating \tool's effectiveness in verifying several (correct) quantum programs that have symbolic inputs or symbolically parametrized quantum gates.

\subsection{Implementation}

\tool\footnote{Available for download at \url{doi.org/10.5281/zenodo.7400321}} is implemented as a Python library interfacing with dReal \cite{DBLP:conf/cade/GaoKC13} using about 5000 lines of code.
The \tool Python API allows users to specify the quantum program using a syntax inspired by Cirq \cite{cirq_developers_2021_5182845}.
The specification can be written using one of two formats: 
\begin{itemize}
    \item \textit{State vector:} One can specify assertions on any of the $2^n$ vector entries.
    \item \textit{Qubits:} One can specify assertions on any of the $n$ qubits.
\end{itemize}
The logic assertions use an SMT-LIB2-compatible Python API and support specifications expressing relationships between program inputs and outputs as well as intermediate states.

\subsection{Benchmark Problems and Setup}
An overview of the benchmark problems is given in \Cref{tab:benchmarks}. % TODO: add gate count
Further descriptions, including the specifications, are given in \Cref{app:programs}.

\begin{table}[t]
    \centering
    \begin{tabularx}{\linewidth}{@{}l p{5.7cm} llc@{}} \toprule
        \textbf{Program}& \textbf{Description}                  & \textbf{Depth\,} & \textbf{Input}         & \textbf{Parametrized} \\ \midrule
        Toffoli         & Toffoli Gate                          & 5             & Bit vector    & No \\
        TP              & Quantum Teleportation Circuit         & 6             & Infinite      & No  \\
        ADD-8           & 8-bit Quantum Adder                   & 48            & Bit vector    & No  \\
        QFT-$n$         & $n$-Qubit Quantum Fourier Transform   & $\mathcal{O}(n^2)$  & Bit vector    & No  \\
        QPE-$n$         & $n$-Bit Quantum Phase Estimation      & $\mathcal{O}(n^2)$  & Concrete      & Yes \\
        GDO-$n$         & $n$-Qubit Grover Diffusion Operator   & $\mathcal{O}(n)$  & Infinite      & No  \\
        \bottomrule
    \end{tabularx}
    \caption{Benchmark quantum programs for evaluating our verification procedure.
    ``Input'' describes the input space to the quantum programs and ``Parametrized'' expresses whether there
    are parametrized gates in the quantum program.
    %, a property which also leads to an infinite state space.
    % TODO: Maybe add number of gates
    }
    \label{tab:benchmarks}
    \vspace{-1em}
\end{table}

We compare our tool (``\tool'') against quantum simulation (``Simulation''), basic SMT solving based on linear algebra (``Basic SMT''), and \tool without over-approximation (``\tool (exact)'').

\begin{itemize}
    \item \textit{Simulation} is implemented in Qiskit \cite{Qiskit}. The technique enumerates all possible inputs to the quantum program and then compares the outputs with the specification. 
    We can only use this technique for a finite input space, i.e., for concrete and bit-vector inputs, but neither for symbolic qubits with the entire Hilbert space $\hilbert_2$ as input space, nor for parametrized gates.
    \item \textit{Basic SMT} is basic SMT solving using vectors and matrices, but not using direct mappings (\Cref{enc:mapping}).
    \item \tool (exact) is a modification of \tool where all over-approximation capabilities are removed, ending up with a technique that performs exact modeling, even when unnecessary (see \Cref{sec:over-approximation}).
\end{itemize}

We do not compare against the proof-assistant approaches \cite{DBLP:journals/corr/abs-1803-00699,DBLP:conf/birthday/LiuWZGLHDY17,DBLP:conf/cav/LiuZWYLLYZ19,DBLP:conf/esop/CharetonBBPV21,DBLP:conf/itp/Hietala0HL021} (cf. \Cref{sec:introduction}) because a comparison of run-times between an automated method, as implemented in \tool, and a semi-automated method relying on manual input is not meaningful. We also do not compare against \cite{DBLP:journals/corr/abs-1805-06908} because it neither supports the full gate set nor formal logic specifications.

The experiments use the value $\delta = 10^{-4}$.
We also compare the run-time of \tool for different precision levels $\delta$.

All experiments are carried out on a workstation with an AMD Ryzen ThreadRipper 3960X @ 3.8\,GHz $\times $ 24 cores processor and 256\,GB RAM.
The machine runs Ubuntu 20.04.3 LTS and each result is the average of 10 runs.

\begin{table*}[t]
    \centering
    \begin{tabular}{@{} p{2.9cm} @{\,\,\,}llll@{}}\toprule
    \textbf{Benchmark} & \textbf{Simulation} & \textbf{Basic SMT} & \textbf{\tool (exact)} & \textbf{\tool}  \\
    \midrule
    Toffoli   & \textbf{0.02 seconds}   & 11.1 seconds  & 1.3 seconds           & 0.4 seconds           \\
    TP        & N/A                     & 44.8 seconds  & \textbf{21.6 seconds} & 31.0 seconds          \\
    ADD-8     & 6.1 hours               & \memoryerror  & \textbf{7.6 seconds}  & 7.8 seconds           \\ \midrule
    QFT-3     & \textbf{0.005 seconds}  & 12.8 seconds  & 5.8 seconds           & 1.0 second            \\
    QFT-5     & \textbf{0.03 seconds}   & 17.6 minutes  & 2.6 minutes           & 26.4 seconds          \\
    QFT-10    & \textbf{1.5 seconds}    & 1.2 hours     & 10.9 hours            & 1.6 hours             \\
    QFT-12    & \textbf{14.0 seconds}   & 4.0 hours     & \timeout              & 7.4 hours             \\ \midrule
    QPE-3     & N/A                     & 19.2 seconds  & 34.0 seconds          & \textbf{8.7 seconds}  \\   
    QPE-5     & N/A                     & 18.2 minutes  & 42.3 minutes          & \textbf{3.9 minutes}  \\ \midrule
    GDO-5     & N/A                     & \timeout      & 9.2 seconds           & \textbf{1.3 seconds}  \\
    GDO-10    & N/A                     & \timeout      & 3.2 minutes           & \textbf{17.0 seconds} \\
    GDO-12    & N/A                     & \timeout      & 14.2 minutes          & \textbf{20.2 seconds} \\
    GDO-15    & N/A                     & \timeout      & 2.9 hours             & \textbf{1.0 minute}   \\
    GDO-18    & N/A                     & \timeout      & \timeout              & \textbf{4.9 minutes}  \\
    GDO-20    & N/A                     & \timeout      & \timeout              & \textbf{17.1 minutes} \\
    GDO-22    & N/A                     & \timeout      & \timeout              & \textbf{1.1 hours}    \\
    GDO-24    & N/A                     & \timeout      & \timeout              & \textbf{4.2 hours}    \\
    \bottomrule
    \end{tabular}
    \caption{Runtime comparison results for the benchmark problems described in \Cref{tab:benchmarks}. ``Simulation'' stands for simulation and enumeration of all cases. ``Basic SMT'' is SMT solving with full state and matrix construction. ``\tool (exact)'' is \tool where over-approximations have been removed. ``\tool'' (this work) utilizes a sound over-approximation. 
    ``N/A'' instances cannot be solved by simulation due to infinite state space.
    ``\memoryerror'' cases exceeded the available memory, and
    ``\timeout'' cases exceed the 12-hour time limit.
    }
    \label{tab:results}
\end{table*}

\begin{table}
    \centering
    \begin{tabular}{@{}l p{3cm} p{3cm} p{3cm} @{}}\toprule
        \textbf{Delta} & GDO-12 & GDO-15 & GDO-18\\
        \midrule
        $10^{-4}$ & 20.2 seconds    & 1.0 minutes   & 4.9 minutes\\
        %$10^{-5}$ & 20.3 seconds    & 18.7 minutes  & 29.9 minutes\\
        $10^{-6}$ & 20.5 seconds    & 28.0 minutes  & 33.1 minutes  \\
        %$10^{-7}$ & 20.5 seconds    & 40.7 minutes  & 58.0 minutes \\
        $10^{-8}$ & 20.8 seconds    & 49.4 minutes  & 58.7 minutes \\
        %$10^{-9}$ & 20.9 seconds    & 49.9 minutes  & 58.8 minutes \\
        $10^{-10}$ & 21.1 seconds   & 52.3 minutes  & 1.2 hours \\
        \bottomrule
    \end{tabular}
    \caption{\tool run-time results for different precision values $\delta$.}
    \label{tab:delta-comp}
\end{table}

\subsection{Results}\label{subsec:results}

We summarize our results in \Cref{tab:results}.
\tool (exact) is best for quantum programs with concrete inputs or a small qubit count (TP and ADD-8); the over-approximation of \tool yields no speed-up for these instances.
\emph{Simulation} performs best for verifying combinatorial problems, i.e., for the quantum Fourier transform (QFT). Here, it can still feasibly enumerate a 12-qubit state space. Interestingly, Basic SMT scales best among the SMT-based procedures here; this is explained by the high amount of controlled operations, for which the mapping-based approach of \tool is inferior.

\smallskip

\tool offers a dramatic performance increase for quantum programs with symbolic inputs, i.e., quantum phase estimation (QPE) and Grover's diffusion operator (GDO). 
This highlights the advantage of over-approximation for this family of quantum programs.
Recall that simulation is not possible for both QPE and GDO, as that would require enumerating infinitely many inputs.
% This is because these quantum programs use parametrized quantum gates which depend on a symbolic parameter; for these, \tool performs best.

The precision value $\delta = 10^{-4}$ was sufficient for all benchmarks in our evaluation.
To investigate scalability in this parameter, \Cref{tab:delta-comp} compares the run-times for different values for GDO with 12, 15, and 18 qubits, respectively. For the higher qubit counts, the run-time increases significantly when we lower $\delta$ to $10^{-6}$, but then remains relatively stable when further tightening precision.

\medskip 

Overall, \tool is the strongest for quantum programs with infinite input space, i.e., programs where the (symbolic) input qubits can span the complete Hilbert space. 
Likewise, for programs that use parametrized quantum gates dependent on a symbolic parameter, \tool is the most effective.

\section{Discussion}

Symbolic execution and formal verification scale exponentially for the quantum case, as is the case for classical software.
That is to be expected: firstly, the simulation of quantum programs on classical hardware already takes exponential time and space due to the matrix representation of quantum mechanics, and secondly because the state space grows with every input variable added to the program.
Nonetheless, we have shown how to keep this exponential blow-up under control by introducing mappings and over-approximations.
In our evaluation, we symbolically executed quantum programs with up to 24 qubits. In comparison, even (concrete) quantum simulation for concrete inputs stops being feasible at around 30 qubits, requiring petabytes of main memory.
In conclusion, \tool is most effective for unknown inputs to the quantum programs or unknown parameters of quantum gates that therefore cannot be tested.

\section{Conclusion}
We introduced \tool, a symbolic verification technique that leverages over-approximation to make automated verification of quantum programs feasible.
We formalized quantum program semantics in SMT and proposed a sound over-approximation that allows scaling to realistic program sizes.
Thanks to the symbolic nature of our approach, we can analyze quantum programs with infinite input space, which is beyond the capabilities of quantum simulation.
We demonstrate these achievements by formally verifying multiple quantum programs against their specifications within a modest time frame.

In this paper, we focused on formalizing the mathematical foundations to model quantum programs, define specifications, and prove their specification compliance.
We intend this to be the first step in a larger, fully automated quantum verification framework, including counterexample-guided refinement. 
In the future, we will investigate strategies that allow us to verify hybrid programs that perform classical and quantum computations.

\bibliography{bibliography.bib}

\newpage
\appendix

\section{Appendix}

\subsection{Example: Gate mapping versus matrix encoding}\label{app:example}

The following is an additional example to illustrate the benefits and weaknesses of both the gate mapping and the matrix encoding of the CNOT gate:

\begin{example}[Controlled-\textsc{not} gate]\label{ex:controlled-not-gate}
    The controlled-\textsc{not} gate can be encoded as a \emph{direct mapping}
    \begin{align*}
        (\ket{q_0}\!, \ket{q_1}) \overset{U_{CX}}{\mapsto} \begin{cases} (\ket{q_0}\!, \ket{q_1}) & \text{if} \ket{q_0} = \ket{0}, \\ (\ket{q_0}\!, X(\ket{q_1})) & \text{if} \ket{q_0} = \ket{1}) \end{cases}
    \end{align*}

    This encoding only supports computational basis states for the control, as any other state would create an entangled state, which can only be encoded using a matrix representation:
    For a state $\ket{q_0 q_1} = \alpha_{00} \ket{00} + \alpha_{01} \ket{01} + \alpha_{10} \ket{10} + \alpha_{11} \ket{11}$, the controlled-\textsc{not} gate can be encoded as a matrix.
    The state vector of $\ket{q_0 q_1}$ is $(\alpha_{00}, \alpha_{01}, \alpha_{10}, \alpha_{11})^\top$. 
    Then,
    \begin{align*}
        U_{CX}(\ket{\psi}) &= U_{CX}\big( \alpha_{00} \ket{00} + \alpha_{01} \ket{01} + \alpha_{10} \ket{10} + \alpha_{11} \ket{11} \!\big)  \\
        &= \alpha_{00} \ket{00} +\alpha_{01} \ket{01} + \alpha_{11} \ket{10} + \alpha_{10} \ket{11}.
    \end{align*}
\end{example}

The following example shows that building multi-control gates is fairly straight-forward when using a mapping:

\begin{example}[Multi-control gate]\label{ex:multi-control-not-gate}
    Any gate $U$ can be equipped with multiple control qubits via a mapping
    \begin{align*}
        &U(\ket{q_n}).\mathtt{c}(\ket{q_0}, \ket{q_1}, \dots, \ket{q_{n-1}}) \mapsto \\
        &\begin{cases} 
        (\ket{q_0}, \ket{q_1}, \dots, \ket{q_n}) & \text{if } \bigvee_{i=0}^{n-1} \left( \ket{q_i} = \ket{0} \right), \\ 
        (\ket{q_0},\ket{q_1}, \dots, U(\ket{q_n})) & \text{if } \bigwedge_{i=0}^{n-1} \left( \ket{q_i} = \ket{1} \right),
        \end{cases}
    \end{align*}
    where function $\mathtt{c}(\cdot)$ binds the control qubits.
    Likewise, this encoding only supports computational basis states for control qubits \ket{q_0}, \ket{q_1}, \dots, \ket{q_{n-1}}.
\end{example}

\subsection{Proof of \Cref{th:equivalence}}\label{app:proof}

\begin{proof}
    \textbf{Base case:}
    Assume a quantum program model $\qmodel = (\mathcal{Q}, S, {\to,}\, \Theta, \mathcal{M}, V_0)$ over $n$ qubits, a state sequence of only two states $S = (s_0, s_1)$ and therefore only a single state operation ${\to} = (\to_1).$
    Further, without loss of generality, assume $\Theta = \mathcal{M} = \varnothing$ and $V_0 =\hilbert_2^n$, as the encoding only filters out executions that do not start in a state present in $\valuation$.

    To prove that this quantum program model is equivalent to the quantum circuit model,
    we begin with the qubit.
    The qubit encoding given in \Cref{enc:modeling-of-qubits} is equivalent to the standard qubit as it occurs in quantum circuits due to the restrictions on the phases, ensuring a magnitude of $|\alpha|^2 + |\beta|^2 = 1$.
    
    To prove equivalence of computations, there are three cases:
    \begin{enumerate}
        \item State operation $\to_1$ is a matrix: Assume an arbitrary input $\ket{\psi}$ to the quantum circuit. According to \Cref{enc:gate-matrix}, we can rename matrix $\to_1$ to $U$ and directly perform matrix multiplication to yield the circuit's output state $\ket{\psi'} = U \ket{\psi}$, which is equivalent to the final state $s_1$ of $\qmodel$, constructed the same way using \Cref{enc:matrix-multiplication}.
        \item State operation $\to_1$ is a mapping: \Cref{enc:mapping} defines a mapping between two sequences of qubits $Q = (q_0, \dots, q_{n-1}), Q' = (q'_0, \dots, q'_{n-1})$.
        Because the encoding restricts the mapping to bijections within Hilbert space, we maintain reversibility of the transformations by ensuring they are unitary, i.e., $\to_i^\dagger \cdot \to_i = I$, and we can express the mapping as a unitary $U$.
        Thus, the quantum circuit's output state is $\ket{\psi'} = U \bigkron_{i=0}^k q_i$.
        This directly extends to all mappings in \crefrange{enc:basic-single-qubit-gates}{enc:swap-gate}.
        \item State operation is a measurement: refer to the final paragraph.
    \end{enumerate}
    
    Now, allowing arbitrary $\Theta$, its elements parametrize state operations in the quantum program model and unitary gates in the circuit model, the only difference being that these parameters are symbolic in the former.
    We have proven equivalence of the computations in a single-operation quantum program model and the quantum circuit model.
    
    \smallskip
    
    % TODO: V_0 of both...
    % Ansprechen: Unbeschränktes V_0 weglassen???
    \textbf{Induction step:}
    For the induction step, we form quantum program models with arbitrarily many operations: 
    We assume two quantum program models $\qmodel = (\mathcal{Q}, S, \to,\, \Theta, \varnothing, V_0)$ and $\qmodel' = (\mathcal{Q}, S', \to',\, \Theta, \varnothing$, $V_0)$,
    with unrestricted initial valuations $V_0 = \hilbert_2^n$, because restricting $V_0$, without loss of generality, only filters out computations that do no start in a specified valuation.
    Both models operate on the same set of qubits $\mathcal{Q}$.
    
    $\qmodel$ contains arbitrarily many operations and $\qmodel'$ contains one
    operation:
    $S = (s_0, \dots, s_{m-1}),$ $\to = (\to_1, \dots, \to_{m-1})$ and $S' = (s_{m-1}, s_m)$, $\to' = (\to_{m})$.
    We immediately observe that this 
    allows us to merge both models where the output of the first is input to the second, forming
    the combined quantum program model $\qmodel^* = (\mathcal{Q}, S^*, \to^*, \Theta, \varnothing, V_0)$
    where $S^* = (s_0, \dots, s_{m})$ and $\to^* = (\to_1, \dots, \to_m)$.
    
    \smallskip
    
    Finally, we prove the equivalence of measurements.
    We use the fact that for any quantum circuit, measurement can be deferred to the end of the computation \cite{nielsen_chuang_2010}.
    In the quantum program model, measurement can occur in any state of the state sequence, creating branching states for all possible measurement outcomes by applying the general measurement matrices.
    Now, without loss of generality, assume that quantum program model $\qmodel$ measures each qubit at most once.
    This allows us to also defer measurements to the last state of the quantum program model.
    Using the equivalence between matrix state operations in $\qmodel$ and matrices in the circuit model and the (non-unitary) matrix representation of quantum measurements \cite{nielsen_chuang_2010}, equivalence between measurements in both representations becomes obvious.
    
    This structural extension allows for building arbitrarily deep quantum program models, concluding the induction step.

    We therefore have proven that we can map between quantum program models and circuits with arbitrarily many gates.
\end{proof}

\subsection{Further Benchmark Information}\label{app:programs}

We describe the benchmark problems in more detail, giving the quantum program model, specification, and \tool code.

\subsubsection{Teleportation}

Quantum teleportation (TP) is used as running example throughout the main body of this paper, cf. \Cref{subsec:tp}.
Performing the verification is ''push-button´´, i.e., only requires writing the quantum program and specification. The corresponding Python code is given in \Cref{lst:tp}.

\pythonexternal[caption={Quantum teleportation verification code in \tool.},captionpos=b,label=lst:tp]{code/tp.py}

\subsubsection{Grover's Diffusion Operator}
GDO is the amplitude amplification component of Grover's algorithm \cite{DBLP:conf/stoc/Grover96}, offering quadratic speedup over classical unordered database search.
The circuit diagram is shown in \Cref{fig:gdo} for 3 qubits.
GDO is defined as $U_\psi = 2 \ket{\psi}\bra{\psi} - I_n$, where $\ket{\psi}$ is the uniform superposition over $2^n$ basis states $\ket{\psi} = \frac{1}{2^n}\sum_{i=0}^{2^n} \ket{i}$ and $I_n$ is the $2^n \times 2^n$ identity matrix.

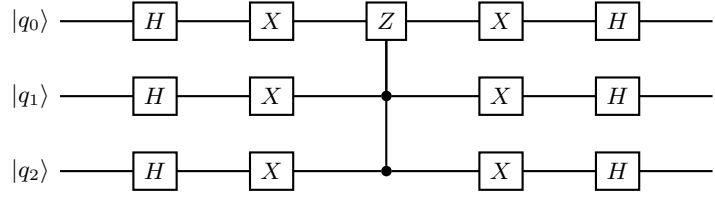
\begin{figure}[H]
    \centering
    \scalebox{1.0}{
    \begin{quantikz}[column sep=3.0em]
    \lstick{\ket{q_0}}
    & \gate{H}  
    & \gate{X}
    & \gate{Z}   
    & \gate{X} 
    & \gate{H}
    & \qw \\
    \lstick{\ket{q_1}}
    & \gate{H}  
    & \gate{X}
    & \ctrl{-1}   
    & \gate{X} 
    & \gate{H}
    & \qw \\
    \lstick{\ket{q_2}}                              
    & \gate{H}  
    & \gate{X}
    & \ctrl{-2}      
    & \gate{X} 
    & \gate{H}
    & \qw \\
    \end{quantikz}
    }
    \vspace{-1.5em}
    \caption{Grover's diffusion operator, adapted from \cite{DBLP:conf/stoc/Grover96}.}
    \label{fig:gdo}
\end{figure}

For Grover's diffusion operator (GDO), we define its quantum program model as
$\qmodel = (\mathcal{Q}, S, \to, \Theta, \mathcal{M}, \valuation)$, where
\begin{align*}
    \mathcal{Q} =&\ \{ \ket{q_0}, \dots, \ket{q_{n-1}} \}, \\
    S =&\ (s_0, \dots, s_F),  \\
    {\to} =&\ \big(H(\mathcal{Q}), X(\mathcal{Q}), \\
    &\ \phantom{(} Z(\ket{q_0}).\mathtt{c}(\ket{q_1}, \dots, \ket{q_{n-1}}), X(\mathcal{Q}), H(\mathcal{Q}) \big), \\ 
    \Theta =&\ \varnothing, \\
    \mathcal{M} =&\ \varnothing, \\
    \valuation =&\ \hilbert_2^n.
\end{align*}

The specification for GDO describes successful amplitude amplification for input qubits with negative phase:
\begin{align*}
    \spec &= \left( s_{1,0} \leq 0 \,?\, s_{1,F} \leq s_{1,0} : \textsc{True} \right) \, \land \\
    & \dots \land \left( s_{n,0} \leq 0 \,?\, s_{n,F} \leq s_{n,0} : \textsc{True} \right)
\end{align*}

The code is shown in \Cref{lst:gdo}.

\begin{lstfloat}[H]
\pythonexternalsmall[caption={Verification code for Grover's diffusion operator in \tool.},captionpos=b,label=lst:gdo]{code/gdo.py}
\end{lstfloat}

\subsubsection{Quantum Fourier Transform}

The quantum Fourier transform is a basic building block of many important quantum algorithms, for example quantum phase estimation (QPE), which is described in the next paragraph. The circuit diagram for the QFT is shown in \Cref{fig:qft}.
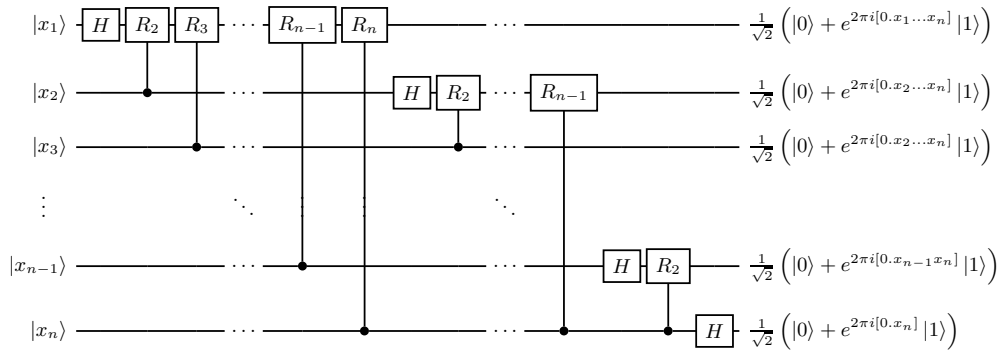
\begin{figure*}
    \centerline{
    \scalebox{0.85}{
    \begin{quantikz}[column sep=1mm]
    \lstick{$\ket{x_1}$}
    &\gate{H}
    & \gate{R_2}
    & \gate{R_3} 
    & \ \ldots\ \qw
    & \gate{R_{n-1}} 
    & \gate{R_n} 
    & \qw 
    & \qw
    & \ \ldots\ \qw
    & \qw
    & \qw
    & \qw
    & \qw
    & \rstick{$\frac{1}{\sqrt{2}} \left(\ket{0} + e^{2 \pi i [0.x_1 \dots x_n]}\ket{1}\right)$} \qw \\
    \lstick{$\ket{x_2}$} \qw 
    & \qw
    & \ctrl{-1}
    & \qw
    & \ \ldots\ \qw
    & \qw 
    & \qw
    & \gate{H}
    & \gate{R_2} 
    & \ \ldots\ \qw
    & \gate{R_{n-1}}
    & \qw 
    & \qw 
    & \qw 
    & \rstick{$\frac{1}{\sqrt{2}} \left(\ket{0} + e^{2 \pi i [0.x_2 \dots x_n]}\ket{1}\right)$} \qw \\
    \lstick{$\ket{x_3}$} \qw
    & \qw 
    & \qw
    & \ctrl{-2}
    & \ \ldots\ \qw
    & \qw 
    & \qw
    & \qw 
    & \ctrl{-1} 
    & \ \ldots\ \qw
    & \qw 
    & \qw
    & \qw 
    & \qw
    & \rstick{$\frac{1}{\sqrt{2}} \left(\ket{0} + e^{2 \pi i [0.x_2 \dots x_n]}\ket{1}\right)$} \qw \\
    \lstick{\vdots \quad} &&&& \ddots & \vdots & \vdots & & & \ddots & & \\
    \lstick{$\ket{x_{n-1}}$} \qw
    & \qw
    & \qw 
    & \qw
    & \ \ldots\ \qw
    & \ctrl{-4}
    & \qw 
    & \qw 
    & \qw
    & \ \ldots\ \qw
    & \qw
    & \gate{H} 
    & \gate{R_2} 
    & \qw
    & \rstick{$\frac{1}{\sqrt{2}} \left(\ket{0} + e^{2 \pi i [0.x_{n-1}x_n]}\ket{1}\right)$} \qw \\
    \lstick{$\ket{x_n}$} \qw
    & \qw 
    & \qw
    & \qw
    & \ \ldots\ \qw
    & \qw
    & \ctrl{-5} 
    & \qw
    & \qw
    & \ \ldots\ \qw
    & \ctrl{-4}
    & \qw 
    & \ctrl{-1}
    & \gate{H}
    & \rstick{$\frac{1}{\sqrt{2}} \left(\ket{0} + e^{2 \pi i [0.x_n]}\ket{1}\right)$} \qw \\
    \end{quantikz}
    }
    }
    \caption{The quantum Fourier transform, adapted from \cite{nielsen_chuang_2010}}
    \label{fig:qft}
\end{figure*}

We introduce the direct mapping for the parametrized phase shift gate.

\begin{myenc2}[Parametrized phase shift gate]
    The phase shift gate $R_k$, which depends on a symbolic parameter $k$, performs a rotation around the $z$ axis. Its mapping is defined for the amplitudes:
    $$
        R_k \left( \mat{\alpha \\ \beta} \right) \defeq \mat{\alpha \\ e^{\frac{2 \pi i}{2^k}} \beta}.
    $$
\end{myenc2}

The model for the QFT is
$\qmodel = (\mathcal{Q}, S, \to, \varnothing, \varnothing, \valuation)$, where
\begin{align*}
    \mathcal{Q} &= \{ \ket{x_1}, \dots, \ket{x_n} \}, \\
    S &= (s_0, \dots, s_F),  \\
    \to &= \big( H(\ket{x_1}), R_2(\ket{x_2}, \ket{x_1}),\\
    &\,\, R_3(\ket{x_3}\!, \ket{x_1}) \dots R_{n-1}(\ket{x_{n-1}}\!, \ket{x_1}), R_{n}(\ket{x_n}\!, \ket{x_1}), \\
    &\quad \ \ H(\ket{x_2}), R_2(\ket{x_3}\!, \ket{x_2}), \dots, R_{n-1}(\ket{x_n}\!, \ket{x_2})  \\
    &\quad \ \ H(\ket{x_{n-1}}), R_2(\ket{x_n}\!, \ket{x_{n-1}}) \\
    &\quad \ \ H(\ket{x_n}) \big), \\ 
    \valuation &= ( \{\ket{0}\!, \ket{1}\}, \dots, \{\ket{0}\!, \ket{1}\} ),
\end{align*}
with $F = (n^2 + n)/2$, and the specification is
\begin{align*}
    \spec = \left( \bigwedge_{i=1}^n \left( \ket{x_{(i,F)}} = \frac{1}{\sqrt{2}} \left(\ket{0} + e^{2 \pi i [0.x_i \dots x_n]} \right) \right) \right).
\end{align*}

The model $\qmodel$ and the negated specification $\neg\spec$ are used to analyze the correctness of the quantum program, with code given in \Cref{lst:qft}.

\pythonexternalsmall[caption={Quantum Fourier transform verification code in \tool.},captionpos=b,label=lst:qft]{code/qft.py}

\subsubsection{Quantum Phase Estimation}

Quantum phase estimation is, unlike the other benchmarks, a program on a concrete input because it 
takes the zero vector as input.
However, it depends on a parameter $\theta$, a real number where $0 \leq \theta \pi$.
The model is $\qmodel = (\mathcal{Q}, S, \to, \Theta, \varnothing, \valuation)$, where
\begin{align*}
    \mathcal{Q} &= \{ \ket{q_0}, \dots, \ket{q_{n-1}}, \ket{u} \}, \\
    S &= (s_0, \dots, s_F),  \\
    \to &= \big(H(\ket{q_0}\!, \dots, \ket{q_{n-1}}), 
    P\big(\ket{0}\!, (2 \theta \pi) ^{2^{n-1}}\big).\mathtt{c}(\ket{q_0}), \\
    & P\big(\ket{0}\!, (2 \theta \pi) ^{2^{n-2}}\big).\mathtt{c}(\ket{q_1}), 
    \dots, 
    P\big(\ket{0}\!, (2 \theta \pi) ^{2^0}\big).\mathtt{c}(\ket{q_{n-1}}), \\
    & \mathit{SWAP}(\ket{q_0}\!, \ket{q_{n-1}}), \mathit{SWAP}(\ket{q_1}\!, \ket{q_{n-2}}), \dots, \\ & \mathit{SWAP}(\big|q_{\lfloor \frac{n}{2} \rfloor - 1}\big\rangle, \big|q_{n-{\lfloor \frac{n}{2} \rfloor}} \big\rangle), \\
    & QFT^\dagger(\ket{q_0}, \dots, \ket{q_{n-1}})
    \big), \\ 
    \Theta &= \{ \theta \}, \\
    \valuation &= \{0\}^{n+1}
\end{align*}

and $QFT^\dagger$ is the inverse of the quantum Fourier transform model.
Let $2^n \theta = a + 2^n \delta$ where $a$ is the nearest integer to $2^n \theta$.
Then, the specification can be given as
\begin{align*}
    \spec = \bigkron_{i = 0}^{n-1} \ket{q_{(i, F)}} = [\alpha_1 \alpha_2 \dots \alpha_{2^n}] \land \alpha_a \geq \frac{4}{\pi^2}.
\end{align*}

The code given in \Cref{lst:qpe}.

\pythonexternalsmall[caption={Quantum phase estimation verification code in \tool.},captionpos=b,label=lst:qpe]{code/qpe.py}

\subsection{Complete SMT Encoding of Running Example}\label{sec:complete_smt_encoding_teleportation}

The following tables give the complete SMT encoding that \tool automatically generates from the code (\Cref{lst:gdo}) provided by the developer of the quantum program.

\begin{enumerate}
    \item \Cref{tab:tp-smt-verbose-states} lists all the symbolic variables per state defined in the quantum program model $\qmodel$ (\Cref{def:quantum-program-model}).
    For each state, \tool generates the corresponding qubit symbols via \Cref{enc:modeling-of-qubits} and \Cref{enc:quantum-measurement}.
    \item \Cref{tab:tp-smt-verbose-qubit-constraints} lists the qubit constraints via \Cref{eq:qubit-degrees-of-freedom} and \Cref{eq:qubit-periods} of \Cref{enc:modeling-of-qubits} that are imposed on the qubits in \Cref{tab:tp-smt-verbose-states}.
    \item \Cref{tab:tp-smt-verbose-initial} lists the assertions specifying the initial valuations $V_0$ defined in the quantum program model $\qmodel$.
    We can see that if the initial valuation of a qubit is the complete Hilbert space $\hilbert_2$, no assertion is put in place.
    \item \Cref{tab:tp-smt-verbose-operations} lists the assertions that specify the program operations and measurements $\to$ defined in the quantum program model $\qmodel$. 
    We can see that, for some steps, direct mappings (\Cref{enc:mapping}) are applied, while for others, especially the entangling steps, a gate matrix (\Cref{enc:gate-matrix}) is used which requires building a state vector (here, $s_1$).
\end{enumerate}

\Cref{tab:tp-smt-verbose-specification} encodes the specification $\varphi$.
The constraint that forbids state operations to cross the line between the first two qubits and the last one before measurement has taken place (cf.\ \Cref{subsec:tp}) is implemented in a preprocessing step before expanding to the SMT formula.

\tiny

\begin{tabularx}{1\textwidth}{@{}lX@{}}\toprule
    \textbf{State}  &       \textbf{Variables} \\ \midrule
    $s_0$           &       $\alpha_{0,0}, \beta_{R_{0,0}}, \beta_{I_{0,0}}, \phi_{0,0}, \theta_{0,0}$,
                            $\alpha_{0,1}, \beta_{R_{0,1}}, \beta_{I_{0,1}}, \phi_{0,1}, \theta_{0,1}$, 
                            $\alpha_{0,2}, \beta_{R_{0,2}}, \beta_{I_{0,2}}, \phi_{0,2}, \theta_{0,2}$ \\
    $s_1$           &       $\alpha_{1,0}, \beta_{R_{1,0}}, \beta_{I_{1,0}}, \phi_{1,0}, \theta_{1,0}$,
                            $\alpha_{1,1}, \beta_{R_{1,1}}, \beta_{I_{1,1}}, \phi_{1,1}, \theta_{1,1}$, 
                            $\alpha_{1,2}, \beta_{R_{1,2}}, \beta_{I_{1,2}}, \phi_{1,2}, \theta_{1,2}$ \\
    $s_2$           &       $\alpha_{2,0}, \beta_{R_{2,0}}, \beta_{I_{2,0}}, \phi_{2,0}, \theta_{2,0}$,
                            $\alpha_{2,1}, \beta_{R_{2,1}}, \beta_{I_{2,1}}, \phi_{2,1}, \theta_{2,1}$, 
                            $\alpha_{2,2}, \beta_{R_{2,2}}, \beta_{I_{2,2}}, \phi_{2,2}, \theta_{2,2}$ \\
    $s_3(00)$\,\,   &       $\alpha_{3,0}(00), \beta_{R_{3,0}}(00), \beta_{I_{3,0}}(00), \phi_{3,0}(00), \theta_{3,0}(00)$,
                            $\alpha_{3,1}(00), \beta_{R_{3,1}}(00), \beta_{I_{3,1}}(00), \phi_{3,1}(00), \theta_{3,1}(00)$, 
                            $\alpha_{3,2}(00), \beta_{R_{3,2}}(00), \beta_{I_{3,2}}(00), \phi_{3,2}(00), \theta_{3,2}(00)$ \\     
    $s_3(01)$       &       $\alpha_{3,0}(01), \beta_{R_{3,0}}(01), \beta_{I_{3,0}}(01), \phi_{3,0}(01), \theta_{3,0}(01)$,
                            $\alpha_{3,1}(01), \beta_{R_{3,1}}(01), \beta_{I_{3,1}}(01), \phi_{3,1}(01), \theta_{3,1}(01)$, 
                            $\alpha_{3,2}(01), \beta_{R_{3,2}}(01), \beta_{I_{3,2}}(01), \phi_{3,2}(01), \theta_{3,2}(01)$ \\ 
    $s_3(10)$       &       $\alpha_{3,0}(10), \beta_{R_{3,0}}(10), \beta_{I_{3,0}}(10), \phi_{3,0}(10), \theta_{3,0}(10)$,
                            $\alpha_{3,1}(10), \beta_{R_{3,1}}(10), \beta_{I_{3,1}}(10), \phi_{3,1}(10), \theta_{3,1}(10)$, 
                            $\alpha_{3,2}(10), \beta_{R_{3,2}}(10), \beta_{I_{3,2}}(10), \phi_{3,2}(10), \theta_{3,2}(10)$ \\ 
    $s_3(11)$       &       $\alpha_{3,0}(11), \beta_{R_{3,0}}(11), \beta_{I_{3,0}}(11), \phi_{3,0}(11), \theta_{3,0}(11)$,
                            $\alpha_{3,1}(11), \beta_{R_{3,1}}(11), \beta_{I_{3,1}}(11), \phi_{3,1}(11), \theta_{3,1}(11)$, 
                            $\alpha_{3,2}(11), \beta_{R_{3,2}}(11), \beta_{I_{3,2}}(11), \phi_{3,2}(11), \theta_{3,2}(11)$ \\ 
    $s_4(00)$       &       $\alpha_{4,0}(00), \beta_{R_{4,0}}(00), \beta_{I_{4,0}}(00), \phi_{4,0}(00), \theta_{4,0}(00)$,
                            $\alpha_{4,1}(00), \beta_{R_{4,1}}(00), \beta_{I_{4,1}}(00), \phi_{4,1}(00), \theta_{4,1}(00)$, 
                            $\alpha_{4,2}(00), \beta_{R_{4,2}}(00), \beta_{I_{4,2}}(00), \phi_{4,2}(00), \theta_{4,2}(00)$ \\     
    $s_4(01)$       &       $\alpha_{4,0}(01), \beta_{R_{4,0}}(01), \beta_{I_{4,0}}(01), \phi_{4,0}(01), \theta_{4,0}(01)$,
                            $\alpha_{4,1}(01), \beta_{R_{4,1}}(01), \beta_{I_{4,1}}(01), \phi_{4,1}(01), \theta_{4,1}(01)$, 
                            $\alpha_{4,2}(01), \beta_{R_{4,2}}(01), \beta_{I_{4,2}}(01), \phi_{4,2}(01), \theta_{4,2}(01)$ \\ 
    $s_4(10)$       &       $\alpha_{4,0}(10), \beta_{R_{4,0}}(10), \beta_{I_{4,0}}(10), \phi_{4,0}(10), \theta_{4,0}(10)$,
                            $\alpha_{4,1}(10), \beta_{R_{4,1}}(10), \beta_{I_{4,1}}(10), \phi_{4,1}(10), \theta_{4,1}(10)$, 
                            $\alpha_{4,2}(10), \beta_{R_{4,2}}(10), \beta_{I_{4,2}}(10), \phi_{4,2}(10), \theta_{4,2}(10)$ \\ 
    $s_4(11)$       &       $\alpha_{4,0}(11), \beta_{R_{4,0}}(11), \beta_{I_{4,0}}(11), \phi_{4,0}(11), \theta_{4,0}(11)$,
                            $\alpha_{4,1}(11), \beta_{R_{4,1}}(11), \beta_{I_{4,1}}(11), \phi_{4,1}(11), \theta_{4,1}(11)$, 
                            $\alpha_{4,2}(11), \beta_{R_{4,2}}(11), \beta_{I_{4,2}}(11), \phi_{4,2}(11), \theta_{4,2}(11)$ \\ 
    $s_5(00)$       &       $\alpha_{5,0}(00), \beta_{R_{5,0}}(00), \beta_{I_{5,0}}(00), \phi_{5,0}(00), \theta_{5,0}(00)$,
                            $\alpha_{5,1}(00), \beta_{R_{5,1}}(00), \beta_{I_{5,1}}(00), \phi_{5,1}(00), \theta_{5,1}(00)$, 
                            $\alpha_{5,2}(00), \beta_{R_{5,2}}(00), \beta_{I_{5,2}}(00), \phi_{5,2}(00), \theta_{5,2}(00)$ \\     
    $s_5(01)$       &       $\alpha_{5,0}(01), \beta_{R_{5,0}}(01), \beta_{I_{5,0}}(01), \phi_{5,0}(01), \theta_{5,0}(01)$,
                            $\alpha_{5,1}(01), \beta_{R_{5,1}}(01), \beta_{I_{5,1}}(01), \phi_{5,1}(01), \theta_{5,1}(01)$, 
                            $\alpha_{5,2}(01), \beta_{R_{5,2}}(01), \beta_{I_{5,2}}(01), \phi_{5,2}(01), \theta_{5,2}(01)$ \\ 
    $s_5(10)$       &       $\alpha_{5,0}(10), \beta_{R_{5,0}}(10), \beta_{I_{5,0}}(10), \phi_{5,0}(10), \theta_{5,0}(10)$,
                            $\alpha_{5,1}(10), \beta_{R_{5,1}}(10), \beta_{I_{5,1}}(10), \phi_{5,1}(10), \theta_{5,1}(10)$, 
                            $\alpha_{5,2}(10), \beta_{R_{5,2}}(10), \beta_{I_{5,2}}(10), \phi_{5,2}(10), \theta_{5,2}(10)$ \\ 
    $s_5(11)$       &       $\alpha_{5,0}(11), \beta_{R_{5,0}}(11), \beta_{I_{5,0}}(11), \phi_{5,0}(11), \theta_{5,0}(11)$,
                            $\alpha_{5,1}(11), \beta_{R_{5,1}}(11), \beta_{I_{5,1}}(11), \phi_{5,1}(11), \theta_{5,1}(11)$, 
                            $\alpha_{5,2}(11), \beta_{R_{5,2}}(11), \beta_{I_{5,2}}(11), \phi_{5,2}(11), \theta_{5,2}(11)$ \\ 
    \bottomrule
    \caption{Full SMT formula of teleportation: states.}
    \label{tab:tp-smt-verbose-states}
\end{tabularx}

\begin{tabularx}{1.05\linewidth}{@{}ccX@{}}\toprule
    \textbf{State} & \textbf{Qubit} & \textbf{Assertions} \\ \midrule
    \rot{$s_0$}
    & \rot{$\ket{q_{0,0}}$}
    & $\alpha_{0,0} = \cos\frac{\theta_{0,0}}{2}, \beta_{R_{0,0}} = \cos\phi_{0,0} \sin\frac{\theta_{0,0}}{2}, \beta_{I_{0,0}} = \sin\phi_{0,0} \sin\frac{\theta_{0,0}}{2},$ \\
    &               & $0 \leq \theta_{0,0} \leq \pi, 0 \leq \phi_{0,0} < 2\pi, \theta_{0,0} = 0 \Rightarrow \phi_{0,0} = 0, \theta_{0,0} = \pi \Rightarrow \phi_{0,0} = 0$. \\
    & \rot{$\ket{q_{0,1}}$}
    & $\alpha_{0,1} = \cos\frac{\theta_{0,1}}{2}, \beta_{R_{0,1}} = \cos\phi_{0,1} \sin\frac{\theta_{0,1}}{2}, \beta_{I_{0,1}} = \sin\phi_{0,1} \sin\frac{\theta_{0,1}}{2},$ \\
    &                & $0 \leq \theta_{0,1} \leq \pi, 0 \leq \phi_{0,1} < 2\pi, \theta_{0,1} = 0 \Rightarrow \phi_{0,1} = 0, \theta_{0,1} = \pi \Rightarrow \phi_{0,1} = 0$. \\
    & \rot{$\ket{q_{0,2}}$} 
    & $\alpha_{0,2} = \cos\frac{\theta_{0,2}}{2}, \beta_{R_{0,2}} = \cos\phi_{0,2} \sin\frac{\theta_{0,2}}{2}, \beta_{I_{0,2}} = \sin\phi_{0,2} \sin\frac{\theta_{0,2}}{2},$ \\
    &                & $0 \leq \theta_{0,2} \leq \pi, 0 \leq \phi_{0,2} < 2\pi, \theta_{0,2} = 0 \Rightarrow \phi_{0,2} = 0, \theta_{0,2} = \pi \Rightarrow \phi_{0,2} = 0$. \\ \midrule
    \rot{$s_1$} 
    & \rot{$\ket{q_{1,0}}$} & $\alpha_{1,0} = \cos\frac{\theta_{1,0}}{2}, \beta_{R_{1,0}} = \cos\phi_{1,0} \sin\frac{\theta_{1,0}}{2}, \beta_{I_{1,0}} = \sin\phi_{1,0} \sin\frac{\theta_{1,0}}{2},$ \\
    &               & $0 \leq \theta_{1,0} \leq \pi, 0 \leq \phi_{1,0} < 2\pi, \theta_{1,0} = 0 \Rightarrow \phi_{1,0} = 0, \theta_{1,0} = \pi \Rightarrow \phi_{1,0} = 0$. \\
    & \rot{$\ket{q_{1,1}}$} & $\alpha_{1,1} = \cos\frac{\theta_{1,1}}{2}, \beta_{R_{1,1}} = \cos\phi_{1,1} \sin\frac{\theta_{1,1}}{2}, \beta_{I_{1,1}} = \sin\phi_{1,1} \sin\frac{\theta_{1,1}}{2},$ \\
    &                & $0 \leq \theta_{1,1} \leq \pi, 0 \leq \phi_{1,1} < 2\pi, \theta_{1,1} = 0 \Rightarrow \phi_{1,1} = 0, \theta_{1,1} = \pi \Rightarrow \phi_{1,1} = 0$. \\
    & \rot{$\ket{q_{1,2}}$} & $\alpha_{1,2} = \cos\frac{\theta_{1,2}}{2}, \beta_{R_{1,2}} = \cos\phi_{1,2} \sin\frac{\theta_{1,2}}{2}, \beta_{I_{1,2}} = \sin\phi_{1,2} \sin\frac{\theta_{1,2}}{2},$ \\
    &                & $0 \leq \theta_{1,2} \leq \pi, 0 \leq \phi_{1,2} < 2\pi, \theta_{1,2} = 0 \Rightarrow \phi_{1,2} = 0, \theta_{1,2} = \pi \Rightarrow \phi_{1,2} = 0$. \\ \midrule
    \rot{$s_2$}
    & \rot{$\ket{q_{2,0}}$} & $\alpha_{2,0} = \cos\frac{\theta_{2,0}}{2}, \beta_{R_{2,0}} = \cos\phi_{2,0} \sin\frac{\theta_{2,0}}{2}, \beta_{I_{2,0}} = \sin\phi_{2,0} \sin\frac{\theta_{2,0}}{2},$ \\
    &               & $0 \leq \theta_{2,0} \leq \pi, 0 \leq \phi_{2,0} < 2\pi, \theta_{2,0} = 0 \Rightarrow \phi_{2,0} = 0, \theta_{2,0} = \pi \Rightarrow \phi_{2,0} = 0$. \\
    & \rot{$\ket{q_{2,1}}$} & $\alpha_{2,1} = \cos\frac{\theta_{2,1}}{2}, \beta_{R_{2,1}} = \cos\phi_{2,1} \sin\frac{\theta_{2,1}}{2}, \beta_{I_{2,1}} = \sin\phi_{2,1} \sin\frac{\theta_{2,1}}{2},$ \\
    &                & $0 \leq \theta_{2,1} \leq \pi, 0 \leq \phi_{2,1} < 2\pi, \theta_{2,1} = 0 \Rightarrow \phi_{2,1} = 0, \theta_{2,1} = \pi \Rightarrow \phi_{2,1} = 0$. \\
    & \rot{$\ket{q_{2,2}}$} & $\alpha_{2,2} = \cos\frac{\theta_{2,2}}{2}, \beta_{R_{2,2}} = \cos\phi_{2,2} \sin\frac{\theta_{2,2}}{2}, \beta_{I_{2,2}} = \sin\phi_{2,2} \sin\frac{\theta_{2,2}}{2},$ \\
    &                & $0 \leq \theta_{2,2} \leq \pi, 0 \leq \phi_{2,2} < 2\pi, \theta_{2,2} = 0 \Rightarrow \phi_{2,2} = 0, \theta_{2,2} = \pi \Rightarrow \phi_{2,2} = 0$. \\ \midrule
    \rot{$s_3(00)$} 
    & \rot{$\ket{q_{3,0}(00)}$} & $\alpha_{3,0}(00) = \cos\frac{\theta_{3,0}(00)}{2}, \beta_{R_{3,0}}(00) = \cos\phi_{3,0}(00) \sin\frac{\theta_{3,0}(00)}{2}, \beta_{I_{3,0}}(00) = \sin\phi_{3,0}(00) \sin\frac{\theta_{3,0}(00)}{2},$ \\
    &               & $0 \leq \theta_{3,0}(00) \leq \pi, 0 \leq \phi_{3,0}(00) < 2\pi, \theta_{3,0}(00) = 0 \Rightarrow \phi_{3,0}(00) = 0, \theta_{3,0}(00) = \pi \Rightarrow \phi_{3,0}(00) = 0$. \\
    & \rot{$\ket{q_{3,1}(00)}$} & $\alpha_{3,1}(00) = \cos\frac{\theta_{3,1}(00)}{2}, \beta_{R_{3,1}}(00) = \cos\phi_{3,1}(00) \sin\frac{\theta_{3,1}(00)}{2}, \beta_{I_{3,1}}(00) = \sin\phi_{3,1}(00) \sin\frac{\theta_{3,1}(00)}{2},$ \\
    &                & $0 \leq \theta_{3,1}(00) \leq \pi, 0 \leq \phi_{3,1}(00) < 2\pi, \theta_{3,1}(00) = 0 \Rightarrow \phi_{3,1}(00) = 0, \theta_{3,1}(00) = \pi \Rightarrow \phi_{3,1}(00) = 0$. \\
    & \rot{$\ket{q_{3,2}(00)}$} & $\alpha_{3,2}(00) = \cos\frac{\theta_{3,2}(00)}{2}, \beta_{R_{3,2}}(00) = \cos\phi_{3,2}(00) \sin\frac{\theta_{3,2}(00)}{2}, \beta_{I_{3,2}}(00) = \sin\phi_{3,2}(00) \sin\frac{\theta_{3,2}(00)}{2},$ \\
    &                & $0 \leq \theta_{3,2}(00) \leq \pi, 0 \leq \phi_{3,2}(00) < 2\pi, \theta_{3,2}(00) = 0 \Rightarrow \phi_{3,2}(00) = 0, \theta_{3,2}(00) = \pi \Rightarrow \phi_{3,2}(00) = 0$. \\ \midrule
    \rot{$s_3(01)$} 
    & \rot{$\ket{q_{3,0}(01)}$} & $\alpha_{3,0}(01) = \cos\frac{\theta_{3,0}(01)}{2}, \beta_{R_{3,0}}(01) = \cos\phi_{3,0}(01) \sin\frac{\theta_{3,0}(01)}{2}, \beta_{I_{3,0}}(01) = \sin\phi_{3,0}(01) \sin\frac{\theta_{3,0}(01)}{2},$ \\
    &               & $0 \leq \theta_{3,0}(01) \leq \pi, 0 \leq \phi_{3,0}(01) < 2\pi, \theta_{3,0}(01) = 0 \Rightarrow \phi_{3,0}(01) = 0, \theta_{3,0}(01) = \pi \Rightarrow \phi_{3,0}(01) = 0$. \\
    & \rot{$\ket{q_{3,1}(01)}$} & $\alpha_{3,1}(01) = \cos\frac{\theta_{3,1}(01)}{2}, \beta_{R_{3,1}}(01) = \cos\phi_{3,1}(01) \sin\frac{\theta_{3,1}(01)}{2}, \beta_{I_{3,1}}(01) = \sin\phi_{3,1}(01) \sin\frac{\theta_{3,1}(01)}{2},$ \\
    &                & $0 \leq \theta_{3,1}(01) \leq \pi, 0 \leq \phi_{3,1}(01) < 2\pi, \theta_{3,1}(01) = 0 \Rightarrow \phi_{3,1}(01) = 0, \theta_{3,1}(01) = \pi \Rightarrow \phi_{3,1}(01) = 0$. \\
    & \rot{$\ket{q_{3,2}(01)}$} & $\alpha_{3,2}(01) = \cos\frac{\theta_{3,2}(01)}{2}, \beta_{R_{3,2}}(01) = \cos\phi_{3,2}(01) \sin\frac{\theta_{3,2}(01)}{2}, \beta_{I_{3,2}}(01) = \sin\phi_{3,2}(01) \sin\frac{\theta_{3,2}(01)}{2},$ \\
    &                & $0 \leq \theta_{3,2}(01) \leq \pi, 0 \leq \phi_{3,2}(01) < 2\pi, \theta_{3,2}(01) = 0 \Rightarrow \phi_{3,2}(01) = 0, \theta_{3,2}(01) = \pi \Rightarrow \phi_{3,2}(01) = 0$. \\ \midrule
    \rot{$s_3(10)$} 
    & \rot{$\ket{q_{3,0}(10)}$} & $\alpha_{3,0}(10) = \cos\frac{\theta_{3,0}(10)}{2}, \beta_{R_{3,0}}(10) = \cos\phi_{3,0}(10) \sin\frac{\theta_{3,0}(10)}{2}, \beta_{I_{3,0}}(10) = \sin\phi_{3,0}(10) \sin\frac{\theta_{3,0}(10)}{2},$ \\
    &               & $0 \leq \theta_{3,0}(10) \leq \pi, 0 \leq \phi_{3,0}(10) < 2\pi, \theta_{3,0}(10) = 0 \Rightarrow \phi_{3,0}(10) = 0, \theta_{3,0}(10) = \pi \Rightarrow \phi_{3,0}(10) = 0$. \\
    & \rot{$\ket{q_{3,1}(10)}$} & $\alpha_{3,1}(10) = \cos\frac{\theta_{3,1}(10)}{2}, \beta_{R_{3,1}}(10) = \cos\phi_{3,1}(10) \sin\frac{\theta_{3,1}(10)}{2}, \beta_{I_{3,1}}(10) = \sin\phi_{3,1}(10) \sin\frac{\theta_{3,1}(10)}{2},$ \\
    &                & $0 \leq \theta_{3,1}(10) \leq \pi, 0 \leq \phi_{3,1}(10) < 2\pi, \theta_{3,1}(10) = 0 \Rightarrow \phi_{3,1}(10) = 0, \theta_{3,1}(10) = \pi \Rightarrow \phi_{3,1}(10) = 0$. \\
    & \rot{$\ket{q_{3,2}(10)}$} & $\alpha_{3,2}(10) = \cos\frac{\theta_{3,2}(10)}{2}, \beta_{R_{3,2}}(10) = \cos\phi_{3,2}(10) \sin\frac{\theta_{3,2}(10)}{2}, \beta_{I_{3,2}}(10) = \sin\phi_{3,2}(10) \sin\frac{\theta_{3,2}(10)}{2},$ \\
    &                & $0 \leq \theta_{3,2}(10) \leq \pi, 0 \leq \phi_{3,2}(10) < 2\pi, \theta_{3,2}(10) = 0 \Rightarrow \phi_{3,2}(10) = 0, \theta_{3,2}(10) = \pi \Rightarrow \phi_{3,2}(10) = 0$. \\ \midrule
    \rot{$s_3(11)$} 
    & \rot{$\ket{q_{3,0}(11)}$} & $\alpha_{3,0}(11) = \cos\frac{\theta_{3,0}(11)}{2}, \beta_{R_{3,0}}(11) = \cos\phi_{3,0}(11) \sin\frac{\theta_{3,0}(11)}{2}, \beta_{I_{3,0}}(11) = \sin\phi_{3,0}(11) \sin\frac{\theta_{3,0}(11)}{2},$ \\
    &               & $0 \leq \theta_{3,0}(11) \leq \pi, 0 \leq \phi_{3,0}(11) < 2\pi, \theta_{3,0}(11) = 0 \Rightarrow \phi_{3,0}(11) = 0, \theta_{3,0}(11) = \pi \Rightarrow \phi_{3,0}(11) = 0$. \\
    & \rot{$\ket{q_{3,1}(11)}$} & $\alpha_{3,1}(11) = \cos\frac{\theta_{3,1}(11)}{2}, \beta_{R_{3,1}}(11) = \cos\phi_{3,1}(11) \sin\frac{\theta_{3,1}(11)}{2}, \beta_{I_{3,1}}(11) = \sin\phi_{3,1}(11) \sin\frac{\theta_{3,1}(11)}{2},$ \\
    &                & $0 \leq \theta_{3,1}(11) \leq \pi, 0 \leq \phi_{3,1}(11) < 2\pi, \theta_{3,1}(11) = 0 \Rightarrow \phi_{3,1}(11) = 0, \theta_{3,1}(11) = \pi \Rightarrow \phi_{3,1}(11) = 0$. \\
    & \rot{$\ket{q_{3,2}(11)}$} & $\alpha_{3,2}(11) = \cos\frac{\theta_{3,2}(11)}{2}, \beta_{R_{3,2}}(11) = \cos\phi_{3,2}(11) \sin\frac{\theta_{3,2}(11)}{2}, \beta_{I_{3,2}}(11) = \sin\phi_{3,2}(11) \sin\frac{\theta_{3,2}(11)}{2},$ \\
    &                & $0 \leq \theta_{3,2}(11) \leq \pi, 0 \leq \phi_{3,2}(11) < 2\pi, \theta_{3,2}(11) = 0 \Rightarrow \phi_{3,2}(11) = 0, \theta_{3,2}(11) = \pi \Rightarrow \phi_{3,2}(11) = 0$. \\ \midrule
    \rot{$s_4(00)$} 
    & \rot{$\ket{q_{4,0}(00)}$} & $\alpha_{4,0}(00) = \cos\frac{\theta_{4,0}(00)}{2}, \beta_{R_{4,0}}(00) = \cos\phi_{4,0}(00) \sin\frac{\theta_{4,0}(00)}{2}, \beta_{I_{4,0}}(00) = \sin\phi_{4,0}(00) \sin\frac{\theta_{4,0}(00)}{2},$ \\
    &               & $0 \leq \theta_{4,0}(00) \leq \pi, 0 \leq \phi_{4,0}(00) < 2\pi, \theta_{4,0}(00) = 0 \Rightarrow \phi_{4,0}(00) = 0, \theta_{4,0}(00) = \pi \Rightarrow \phi_{4,0}(00) = 0$. \\
    & \rot{$\ket{q_{4,1}(00)}$} & $\alpha_{4,1}(00) = \cos\frac{\theta_{4,1}(00)}{2}, \beta_{R_{4,1}}(00) = \cos\phi_{4,1}(00) \sin\frac{\theta_{4,1}(00)}{2}, \beta_{I_{4,1}}(00) = \sin\phi_{4,1}(00) \sin\frac{\theta_{4,1}(00)}{2},$ \\
    &                & $0 \leq \theta_{4,1}(00) \leq \pi, 0 \leq \phi_{4,1}(00) < 2\pi, \theta_{4,1}(00) = 0 \Rightarrow \phi_{4,1}(00) = 0, \theta_{4,1}(00) = \pi \Rightarrow \phi_{4,1}(00) = 0$. \\
    & \rot{$\ket{q_{4,2}(00)}$} & $\alpha_{4,2}(00) = \cos\frac{\theta_{4,2}(00)}{2}, \beta_{R_{4,2}}(00) = \cos\phi_{4,2}(00) \sin\frac{\theta_{4,2}(00)}{2}, \beta_{I_{4,2}}(00) = \sin\phi_{4,2}(00) \sin\frac{\theta_{4,2}(00)}{2},$ \\
    &                & $0 \leq \theta_{4,2}(00) \leq \pi, 0 \leq \phi_{4,2}(00) < 2\pi, \theta_{4,2}(00) = 0 \Rightarrow \phi_{4,2}(00) = 0, \theta_{4,2}(00) = \pi \Rightarrow \phi_{4,2}(00) = 0$. \\ \midrule
    \rot{$s_4(01)$} 
    & \rot{$\ket{q_{4,0}(01)}$} & $\alpha_{4,0}(01) = \cos\frac{\theta_{4,0}(01)}{2}, \beta_{R_{4,0}}(01) = \cos\phi_{4,0}(01) \sin\frac{\theta_{4,0}(01)}{2}, \beta_{I_{4,0}}(01) = \sin\phi_{4,0}(01) \sin\frac{\theta_{4,0}(01)}{2},$ \\
    &               & $0 \leq \theta_{4,0}(01) \leq \pi, 0 \leq \phi_{4,0}(01) < 2\pi, \theta_{4,0}(01) = 0 \Rightarrow \phi_{4,0}(01) = 0, \theta_{4,0}(01) = \pi \Rightarrow \phi_{4,0}(01) = 0$. \\
    & \rot{$\ket{q_{4,1}(01)}$} & $\alpha_{4,1}(01) = \cos\frac{\theta_{4,1}(01)}{2}, \beta_{R_{4,1}}(01) = \cos\phi_{4,1}(01) \sin\frac{\theta_{4,1}(01)}{2}, \beta_{I_{4,1}}(01) = \sin\phi_{4,1}(01) \sin\frac{\theta_{4,1}(01)}{2},$ \\
    &                & $0 \leq \theta_{4,1}(01) \leq \pi, 0 \leq \phi_{4,1}(01) < 2\pi, \theta_{4,1}(01) = 0 \Rightarrow \phi_{4,1}(01) = 0, \theta_{4,1}(01) = \pi \Rightarrow \phi_{4,1}(01) = 0$. \\
    & \rot{$\ket{q_{4,2}(01)}$} & $\alpha_{4,2}(01) = \cos\frac{\theta_{4,2}(01)}{2}, \beta_{R_{4,2}}(01) = \cos\phi_{4,2}(01) \sin\frac{\theta_{4,2}(01)}{2}, \beta_{I_{4,2}}(01) = \sin\phi_{4,2}(01) \sin\frac{\theta_{4,2}(01)}{2},$ \\
    &                & $0 \leq \theta_{4,2}(01) \leq \pi, 0 \leq \phi_{4,2}(01) < 2\pi, \theta_{4,2}(01) = 0 \Rightarrow \phi_{4,2}(01) = 0, \theta_{4,2}(01) = \pi \Rightarrow \phi_{4,2}(01) = 0$. \\ \midrule
    \rot{$s_4(10)$} 
    & \rot{$\ket{q_{4,0}(10)}$} & $\alpha_{4,0}(10) = \cos\frac{\theta_{4,0}(10)}{2}, \beta_{R_{4,0}}(10) = \cos\phi_{4,0}(10) \sin\frac{\theta_{4,0}(10)}{2}, \beta_{I_{4,0}}(10) = \sin\phi_{4,0}(10) \sin\frac{\theta_{4,0}(10)}{2},$ \\
    &               & $0 \leq \theta_{4,0}(10) \leq \pi, 0 \leq \phi_{4,0}(10) < 2\pi, \theta_{4,0}(10) = 0 \Rightarrow \phi_{4,0}(10) = 0, \theta_{4,0}(10) = \pi \Rightarrow \phi_{4,0}(10) = 0$. \\
    & \rot{$\ket{q_{4,1}(10)}$} & $\alpha_{4,1}(10) = \cos\frac{\theta_{4,1}(10)}{2}, \beta_{R_{4,1}}(10) = \cos\phi_{4,1}(10) \sin\frac{\theta_{4,1}(10)}{2}, \beta_{I_{4,1}}(10) = \sin\phi_{4,1}(10) \sin\frac{\theta_{4,1}(10)}{2},$ \\
    &                & $0 \leq \theta_{4,1}(10) \leq \pi, 0 \leq \phi_{4,1}(10) < 2\pi, \theta_{4,1}(10) = 0 \Rightarrow \phi_{4,1}(10) = 0, \theta_{4,1}(10) = \pi \Rightarrow \phi_{4,1}(10) = 0$. \\
    & \rot{$\ket{q_{4,2}(10)}$} & $\alpha_{4,2}(10) = \cos\frac{\theta_{4,2}(10)}{2}, \beta_{R_{4,2}}(10) = \cos\phi_{4,2}(10) \sin\frac{\theta_{4,2}(10)}{2}, \beta_{I_{4,2}}(10) = \sin\phi_{4,2}(10) \sin\frac{\theta_{4,2}(10)}{2},$ \\
    &                & $0 \leq \theta_{4,2}(10) \leq \pi, 0 \leq \phi_{4,2}(10) < 2\pi, \theta_{4,2}(10) = 0 \Rightarrow \phi_{4,2}(10) = 0, \theta_{4,2}(10) = \pi \Rightarrow \phi_{4,2}(10) = 0$. \\ \midrule
    \rot{$s_4(11)$} 
    & \rot{$\ket{q_{4,0}(11)}$} & $\alpha_{4,0}(11) = \cos\frac{\theta_{4,0}(11)}{2}, \beta_{R_{4,0}}(11) = \cos\phi_{4,0}(11) \sin\frac{\theta_{4,0}(11)}{2}, \beta_{I_{4,0}}(11) = \sin\phi_{4,0}(11) \sin\frac{\theta_{4,0}(11)}{2},$ \\
    &               & $0 \leq \theta_{4,0}(11) \leq \pi, 0 \leq \phi_{4,0}(11) < 2\pi, \theta_{4,0}(11) = 0 \Rightarrow \phi_{4,0}(11) = 0, \theta_{4,0}(11) = \pi \Rightarrow \phi_{4,0}(11) = 0$. \\
    & \rot{$\ket{q_{4,1}(11)}$} & $\alpha_{4,1}(11) = \cos\frac{\theta_{4,1}(11)}{2}, \beta_{R_{4,1}}(11) = \cos\phi_{4,1}(11) \sin\frac{\theta_{4,1}(11)}{2}, \beta_{I_{4,1}}(11) = \sin\phi_{4,1}(11) \sin\frac{\theta_{4,1}(11)}{2},$ \\
    &                & $0 \leq \theta_{4,1}(11) \leq \pi, 0 \leq \phi_{4,1}(11) < 2\pi, \theta_{4,1}(11) = 0 \Rightarrow \phi_{4,1}(11) = 0, \theta_{4,1}(11) = \pi \Rightarrow \phi_{4,1}(11) = 0$. \\
    & \rot{$\ket{q_{4,2}(11)}$} & $\alpha_{4,2}(11) = \cos\frac{\theta_{4,2}(11)}{2}, \beta_{R_{4,2}}(11) = \cos\phi_{4,2}(11) \sin\frac{\theta_{4,2}(11)}{2}, \beta_{I_{4,2}}(11) = \sin\phi_{4,2}(11) \sin\frac{\theta_{4,2}(11)}{2},$ \\
    &                & $0 \leq \theta_{4,2}(11) \leq \pi, 0 \leq \phi_{4,2}(11) < 2\pi, \theta_{4,2}(11) = 0 \Rightarrow \phi_{4,2}(11) = 0, \theta_{4,2}(11) = \pi \Rightarrow \phi_{4,2}(11) = 0$. \\ \midrule
    \rot{$s_5(00)$} 
    & \rot{$\ket{q_{5,0}(00)}$} & $\alpha_{5,0}(00) = \cos\frac{\theta_{5,0}(00)}{2}, \beta_{R_{5,0}}(00) = \cos\phi_{5,0}(00) \sin\frac{\theta_{5,0}(00)}{2}, \beta_{I_{5,0}}(00) = \sin\phi_{5,0}(00) \sin\frac{\theta_{5,0}(00)}{2},$ \\
    &               & $0 \leq \theta_{5,0}(00) \leq \pi, 0 \leq \phi_{5,0}(00) < 2\pi, \theta_{5,0}(00) = 0 \Rightarrow \phi_{5,0}(00) = 0, \theta_{5,0}(00) = \pi \Rightarrow \phi_{5,0}(00) = 0$. \\
    & \rot{$\ket{q_{5,1}(00)}$} & $\alpha_{5,1}(00) = \cos\frac{\theta_{5,1}(00)}{2}, \beta_{R_{5,1}}(00) = \cos\phi_{5,1}(00) \sin\frac{\theta_{5,1}(00)}{2}, \beta_{I_{5,1}}(00) = \sin\phi_{5,1}(00) \sin\frac{\theta_{5,1}(00)}{2},$ \\
    &                & $0 \leq \theta_{5,1}(00) \leq \pi, 0 \leq \phi_{5,1}(00) < 2\pi, \theta_{5,1}(00) = 0 \Rightarrow \phi_{5,1}(00) = 0, \theta_{5,1}(00) = \pi \Rightarrow \phi_{5,1}(00) = 0$. \\
    & \rot{$\ket{q_{5,2}(00)}$} & $\alpha_{5,2}(00) = \cos\frac{\theta_{5,2}(00)}{2}, \beta_{R_{5,2}}(00) = \cos\phi_{5,2}(00) \sin\frac{\theta_{5,2}(00)}{2}, \beta_{I_{5,2}}(00) = \sin\phi_{5,2}(00) \sin\frac{\theta_{5,2}(00)}{2},$ \\
    &                & $0 \leq \theta_{5,2}(00) \leq \pi, 0 \leq \phi_{5,2}(00) < 2\pi, \theta_{5,2}(00) = 0 \Rightarrow \phi_{5,2}(00) = 0, \theta_{5,2}(00) = \pi \Rightarrow \phi_{5,2}(00) = 0$. \\ \midrule
    \rot{$s_4(01)$} 
    & \rot{$\ket{q_{5,0}(01)}$} & $\alpha_{5,0}(01) = \cos\frac{\theta_{5,0}(01)}{2}, \beta_{R_{5,0}}(01) = \cos\phi_{5,0}(01) \sin\frac{\theta_{5,0}(01)}{2}, \beta_{I_{5,0}}(01) = \sin\phi_{5,0}(01) \sin\frac{\theta_{5,0}(01)}{2},$ \\
    &               & $0 \leq \theta_{5,0}(01) \leq \pi, 0 \leq \phi_{5,0}(01) < 2\pi, \theta_{5,0}(01) = 0 \Rightarrow \phi_{5,0}(01) = 0, \theta_{5,0}(01) = \pi \Rightarrow \phi_{5,0}(01) = 0$. \\
    & \rot{$\ket{q_{5,1}(01)}$} & $\alpha_{5,1}(01) = \cos\frac{\theta_{5,1}(01)}{2}, \beta_{R_{5,1}}(01) = \cos\phi_{5,1}(01) \sin\frac{\theta_{5,1}(01)}{2}, \beta_{I_{5,1}}(01) = \sin\phi_{5,1}(01) \sin\frac{\theta_{5,1}(01)}{2},$ \\
    &                & $0 \leq \theta_{5,1}(01) \leq \pi, 0 \leq \phi_{5,1}(01) < 2\pi, \theta_{5,1}(01) = 0 \Rightarrow \phi_{5,1}(01) = 0, \theta_{5,1}(01) = \pi \Rightarrow \phi_{5,1}(01) = 0$. \\
    & \rot{$\ket{q_{5,2}(01)}$} & $\alpha_{5,2}(01) = \cos\frac{\theta_{5,2}(01)}{2}, \beta_{R_{5,2}}(01) = \cos\phi_{5,2}(01) \sin\frac{\theta_{5,2}(01)}{2}, \beta_{I_{5,2}}(01) = \sin\phi_{5,2}(01) \sin\frac{\theta_{5,2}(01)}{2},$ \\
    &                & $0 \leq \theta_{5,2}(01) \leq \pi, 0 \leq \phi_{5,2}(01) < 2\pi, \theta_{5,2}(01) = 0 \Rightarrow \phi_{5,2}(01) = 0, \theta_{5,2}(01) = \pi \Rightarrow \phi_{5,2}(01) = 0$. \\ \midrule
    \rot{$s_4(10)$} 
    & \rot{$\ket{q_{5,0}(10)}$} & $\alpha_{5,0}(10) = \cos\frac{\theta_{5,0}(10)}{2}, \beta_{R_{5,0}}(10) = \cos\phi_{5,0}(10) \sin\frac{\theta_{5,0}(10)}{2}, \beta_{I_{5,0}}(10) = \sin\phi_{5,0}(10) \sin\frac{\theta_{5,0}(10)}{2},$ \\
    &               & $0 \leq \theta_{5,0}(10) \leq \pi, 0 \leq \phi_{5,0}(10) < 2\pi, \theta_{5,0}(10) = 0 \Rightarrow \phi_{5,0}(10) = 0, \theta_{5,0}(10) = \pi \Rightarrow \phi_{5,0}(10) = 0$. \\
    & \rot{$\ket{q_{5,1}(10)}$} & $\alpha_{5,1}(10) = \cos\frac{\theta_{5,1}(10)}{2}, \beta_{R_{5,1}}(10) = \cos\phi_{5,1}(10) \sin\frac{\theta_{5,1}(10)}{2}, \beta_{I_{5,1}}(10) = \sin\phi_{5,1}(10) \sin\frac{\theta_{5,1}(10)}{2},$ \\
    &                & $0 \leq \theta_{5,1}(10) \leq \pi, 0 \leq \phi_{5,1}(10) < 2\pi, \theta_{5,1}(10) = 0 \Rightarrow \phi_{5,1}(10) = 0, \theta_{5,1}(10) = \pi \Rightarrow \phi_{5,1}(10) = 0$. \\
    & \rot{$\ket{q_{5,2}(10)}$} & $\alpha_{5,2}(10) = \cos\frac{\theta_{5,2}(10)}{2}, \beta_{R_{5,2}}(10) = \cos\phi_{5,2}(10) \sin\frac{\theta_{5,2}(10)}{2}, \beta_{I_{5,2}}(10) = \sin\phi_{5,2}(10) \sin\frac{\theta_{5,2}(10)}{2},$ \\
    &                & $0 \leq \theta_{5,2}(10) \leq \pi, 0 \leq \phi_{5,2}(10) < 2\pi, \theta_{5,2}(10) = 0 \Rightarrow \phi_{5,2}(10) = 0, \theta_{5,2}(10) = \pi \Rightarrow \phi_{5,2}(10) = 0$. \\ \midrule
    \rot{$s_4(11)$} 
    & \rot{$\ket{q_{5,0}(11)}$} & $\alpha_{5,0}(11) = \cos\frac{\theta_{5,0}(11)}{2}, \beta_{R_{5,0}}(11) = \cos\phi_{5,0}(11) \sin\frac{\theta_{5,0}(11)}{2}, \beta_{I_{5,0}}(11) = \sin\phi_{5,0}(11) \sin\frac{\theta_{5,0}(11)}{2},$ \\
    &               & $0 \leq \theta_{5,0}(11) \leq \pi, 0 \leq \phi_{5,0}(11) < 2\pi, \theta_{5,0}(11) = 0 \Rightarrow \phi_{5,0}(11) = 0, \theta_{5,0}(11) = \pi \Rightarrow \phi_{5,0}(11) = 0$. \\
    & \rot{$\ket{q_{5,1}(11)}$} & $\alpha_{5,1}(11) = \cos\frac{\theta_{5,1}(11)}{2}, \beta_{R_{5,1}}(11) = \cos\phi_{5,1}(11) \sin\frac{\theta_{5,1}(11)}{2}, \beta_{I_{5,1}}(11) = \sin\phi_{5,1}(11) \sin\frac{\theta_{5,1}(11)}{2},$ \\
    &                & $0 \leq \theta_{5,1}(11) \leq \pi, 0 \leq \phi_{5,1}(11) < 2\pi, \theta_{5,1}(11) = 0 \Rightarrow \phi_{5,1}(11) = 0, \theta_{5,1}(11) = \pi \Rightarrow \phi_{5,1}(11) = 0$. \\
    & \rot{$\ket{q_{5,2}(11)}$} & $\alpha_{5,2}(11) = \cos\frac{\theta_{5,2}(11)}{2}, \beta_{R_{5,2}}(11) = \cos\phi_{5,2}(11) \sin\frac{\theta_{5,2}(11)}{2}, \beta_{I_{5,2}}(11) = \sin\phi_{5,2}(11) \sin\frac{\theta_{5,2}(11)}{2},$ \\
    &                & $0 \leq \theta_{5,2}(11) \leq \pi, 0 \leq \phi_{5,2}(11) < 2\pi, \theta_{5,2}(11) = 0 \Rightarrow \phi_{5,2}(11) = 0, \theta_{5,2}(11) = \pi \Rightarrow \phi_{5,2}(11) = 0$. 
    \\ \bottomrule
    \caption{Full SMT formula of teleportation: qubit constraints.}
    \label{tab:tp-smt-verbose-qubit-constraints}
\end{tabularx}

\normalsize

\begin{tabularx}{1.05\linewidth}{@{}lXX@{}}\toprule
    \textbf{State vector}       & \textbf{Variables} \\ \midrule
    $s_0$              & $s_{{0,0}_R}, s_{{0,0}_I}, s_{{0,1}_R}, s_{{0,1}_I}, s_{{0,2}_R}, s_{{0,2}_I}, s_{{0,3}_R}, s_{{0,3}_I}$ \\ \midrule
    \textbf{Initial valuation}  & \textbf{Assertions} \\ \midrule
    $\ket{q_1}, \ket{q_2}$ & $s_{{0,0}_R} = \alpha_{0,1}\alpha_{0,2}$,          & $s_{{0,0}_I} = 0$, \\
                           & $s_{{0,1}_R} = \alpha_{0,1}\beta_{R_{0,2}}$,       & $ s_{{0,1}_I} = \alpha_{0,1}\beta_{I_{0,2}}$, \\
                           & $s_{{0,2}_R} = \alpha_{0,2}\beta_{R_{0,1}}$,       & $ s_{{0,2}_I} = \alpha_{0,2}\beta_{I_{0,1}}$, \\
                           & $s_{{0,3}_R} = \beta_{R_{0,1}}\beta_{R_{0,2}} - \beta_{I_{0,1}}\beta_{I_{0,2}}$,   
                           & $s_{{0,3}_I} = \beta_{I_{0,1}}\beta_{R_{0,2}} + \beta_{R_{0,1}}\beta_{I_{0,2}}$, \\
                           & $s_{{0,0}_R} = 1, s_{{0,0}_I} = 0$, \\ 
                           & $s_{{0,1}_R} = 0, s_{{0,1}_I} = 0$, \\
                           & $s_{{0,2}_R} = 0, s_{{0,2}_I} = 0$, \\ 
                           & $s_{{0,3}_R} = 1, s_{{0,3}_I} = 0$.
    \\ \bottomrule
    \caption{Full SMT formula of teleportation: initial valuation.}
    \label{tab:tp-smt-verbose-initial}
\end{tabularx}

\small

\begin{tabularx}{1.05\linewidth}{@{}lcX@{}}\toprule
    \multicolumn{2}{l}{\textbf{State vector}} & \textbf{Variables} \\ \midrule
    \multicolumn{2}{l}{$s_1$}        & $s_{{1,0}_R}, s_{{1,0}_I}, s_{{1,1}_R}, s_{{1,1}_I}, s_{{1,2}_R}, s_{{1,2}_I}, s_{{1,3}_R}, s_{{1,3}_I}$ \\ \midrule
    \textbf{State} & \textbf{Operation}       & \textbf{Assertions} \\ \midrule
    $s_1$ & \rotmulti{6}{$U_{CX}(\ket{q_0}\!,\ket{q_1})$}     
       & $s_{{1,0}_R} = \alpha_{1,0}\alpha_{1,1}$,           $s_{{1,0}_I} = 0$, \\
    &  & $s_{{1,0}_R} = \alpha_{1,0}\beta_{R_{1,1}}$,        $ s_{{1,0}_I} = \alpha_{1,0}\beta_{I_{1,1}}$, \\
    &  & $s_{{1,1}_R} = \alpha_{1,1}\beta_{R_{1,0}}$,        $ s_{{1,1}_I} = \alpha_{1,1}\beta_{I_{1,0}}$, \\
    &  & $s_{{1,3}_R} = \beta_{R_{1,0}}\beta_{R_{1,1}} - \beta_{I_{1,0}}\beta_{I_{1,1}}$,   
     $s_{{1,3}_I} = \beta_{I_{1,0}}\beta_{R_{1,1}} + \beta_{R_{1,0}}\beta_{I_{1,1}}$, \\
    &  & $s_{{1,0}_R} = s_{{0,0}_R}, s_{{1,0}_I} = s_{{0,0}_I}$, \\ 
    &  & $s_{{1,1}_R} = s_{{0,2}_R}, s_{{1,1}_I} = s_{{0,2}_I}$, \\
    &  & $s_{{1,2}_R} = s_{{0,1}_R}, s_{{1,2}_I} = s_{{0,1}_I}$, \\ 
    &  & $s_{{1,3}_R} = s_{{0,3}_R}, s_{{1,3}_I} = s_{{0,3}_I}$, \\ 
    &  & $\alpha_{1,2} = \alpha_{0,2}, \beta_{R_{1,2}} = \beta_{R_{0,2}}, \beta_{I_{1,2}} = \beta_{I_{0,2}}$ \\ \midrule
    $s_2$ & \rotmulti{3}{$H(\ket{q_0})$}                   & $\alpha_{2,0} = (\alpha_{1,0} + \beta_{R_{1,0}})/\sqrt{2}$, %TODO: imaginary???
                                                $\beta_{R_{2,0}} = (\alpha_{1,0} - \beta_{R_{1,0}})/\sqrt{2}$ \\
                                            & & $\alpha_{2,1} = \alpha_{1,1}, \beta_{R_{2,1}} = \beta_{R_{1,1}}, \beta_{I_{2,1}} = \beta_{I_{1,1}},$ \\
                                            & & $\alpha_{2,2} = \alpha_{1,2}, \beta_{R_{2,2}} = \beta_{R_{1,2}}, \beta_{I_{2,2}} = \beta_{I_{1,2}}$ \\ 
    \midrule
    $s_3(00)$ & \rotmulti{9}{$M(\ket{q_0}), M(\ket{q_1})$}  & $\alpha_{3,0}(00) = 1, \beta_{R_{3,0}}(00) = 0, \beta_{I_{3,0}}(00) = 0$, \\
                                            & & $\alpha_{3,1}(00) = 1, \beta_{R_{3,1}}(00) = 0, \beta_{I_{3,1}}(00) = 0$, \\
                                            & & $\alpha_{3,2}(00) = \alpha_{2,2}, \beta_{R_{3,2}}(00) = \beta_{R_{2,2}}, \beta_{I_{3,2}}(00) = \beta_{I_{2,2}}$, \\
    $s_3(01)$                               & & $\alpha_{3,0}(01) = 1, \beta_{R_{3,0}}(01) = 0, \beta_{I_{3,0}}(01) = 0$, \\
                                            & & $\alpha_{3,1}(01) = 0, \beta_{R_{3,1}}(01) = 1, \beta_{I_{3,1}}(01) = 0$, \\
                                            & & $\alpha_{3,2}(01) = \alpha_{2,2}, \beta_{R_{3,2}}(01) = \beta_{R_{2,2}}, \beta_{I_{3,2}}(01) = \beta_{I_{2,2}}$, \\
    $s_3(10)$                               & & $\alpha_{3,0}(10) = 0, \beta_{R_{3,0}}(10) = 1, \beta_{I_{3,0}}(10) = 0$, \\
                                            & & $\alpha_{3,1}(10) = 1, \beta_{R_{3,1}}(10) = 0, \beta_{I_{3,1}}(10) = 0$, \\   
                                            & & $\alpha_{3,2}(10) = \alpha_{2,2}, \beta_{R_{3,2}}(10) = \beta_{R_{2,2}}, \beta_{I_{3,2}}(10) = \beta_{I_{2,2}}$, \\
    $s_3(11)$                               & & $\alpha_{3,0}(11) = 0, \beta_{R_{3,0}}(11) = 1, \beta_{I_{3,0}}(11) = 0$, \\
                                            & & $\alpha_{3,1}(11) = 0, \beta_{R_{3,1}}(11) = 1, \beta_{I_{3,1}}(11) = 0$, \\
                                            & & $\alpha_{3,2}(11) = \alpha_{2,2}, \beta_{R_{3,2}}(11) = \beta_{R_{2,2}}, \beta_{I_{3,2}}(11) = \beta_{I_{2,2}}$, \\
    \midrule
    $s_4(00)$ & \rotmulti{9}{$U_{CX}(\ket{q_1},\ket{q_2})$} & $\alpha_{4,0}(00) = \alpha_{3,0}(00), \beta_{R_{4,0}}(00) 
                                            \beta_{R_{3,0}}(00), \beta_{I_{4,0}}(00) = \beta_{I_{3,0}}(00)$, \\
                                            & & $\alpha_{4,1}(00) = \alpha_{3,1}(00), \beta_{R_{4,1}}(00) = \beta_{R_{3,1}}(00), \beta_{I_{4,1}}(00) = \beta_{I_{3,1}}(00)$, \\
                                            & & $\alpha_{4,2}(00) = \alpha_{3,2}(00), \beta_{R_{4,2}}(00) = \beta_{R_{3,2}}(00), \beta_{I_{4,2}}(00) = \beta_{I_{3,2}}(00)$, \\
    $s_4(01)$                               & & $\alpha_{4,0}(01) = \alpha_{3,0}(01), \beta_{R_{4,0}}(01) 
                                            \beta_{R_{3,0}}(01), \beta_{I_{4,0}}(01) = \beta_{I_{3,0}}(01)$, \\
                                            & & $\alpha_{4,1}(01) = \alpha_{3,1}(01), \beta_{R_{4,1}}(01) = \beta_{R_{3,1}}(01), \beta_{I_{4,1}}(01) = \beta_{I_{3,1}}(01)$, \\
                                            & & $\alpha_{4,2}(01) = \alpha_{3,2}(01), \beta_{R_{4,2}}(01) = \beta_{R_{3,2}}(01), \beta_{I_{4,2}}(01) = \beta_{I_{3,2}}(01)$, \\
    $s_4(10)$                               & & $\alpha_{4,0}(10) = \alpha_{3,0}(10), \beta_{R_{4,0}}(10) 
                                            \beta_{R_{3,0}}(10), \beta_{I_{4,0}}(10) = \beta_{I_{3,0}}(10)$, \\
                                            & & $\alpha_{4,1}(10) = \alpha_{3,1}(10), \beta_{R_{4,1}}(10) = \beta_{R_{3,1}}(10), \beta_{I_{4,1}}(10) = \beta_{I_{3,1}}(10)$, \\
                                            & & $\alpha_{4,2}(10) = \beta{3,2}(10), \beta_{R_{4,2}}(10) = \alpha_{R_{3,2}}(10), \beta_{I_{4,2}}(10) = \beta_{I_{3,2}}(10)$, % TODO: what happens with imaginary unit? 
                                            \\
    $s_4(11)$                               & & $\alpha_{4,0}(11) = \alpha_{3,0}(11), \beta_{R_{4,0}}(11) 
                                            \beta_{R_{3,0}}(11), \beta_{I_{4,0}}(11) = \beta_{I_{3,0}}(11)$, \\
                                            & & $\alpha_{4,1}(11) = \alpha_{3,1}(11), \beta_{R_{4,1}}(11) = \beta_{R_{3,1}}(11), \beta_{I_{4,1}}(11) = \beta_{I_{3,1}}(11)$, \\
                                            & & $\alpha_{4,2}(11) = \beta{3,2}(11), \beta_{R_{4,2}}(11) = \alpha_{R_{3,2}}(11), \beta_{I_{4,2}}(11) = \beta_{I_{3,2}}(11)$, % TODO: what happens with imaginary unit? 
                                            \\
    \midrule
    $s_5(00)$ &                               & $\alpha_{5,0}(00) = \alpha_{4,0}(00), \beta_{R_{5,0}}(00) 
                                            \beta_{R_{4,0}}(00), \beta_{I_{5,0}}(00) = \beta_{I_{4,0}}(00)$, \\
                                            & & $\alpha_{5,1}(00) = \alpha_{4,1}(00), \beta_{R_{5,1}}(00) = \beta_{R_{4,1}}(00), \beta_{I_{5,1}}(00) = \beta_{I_{4,1}}(00)$, \\
              & \rotmulti{9}{$U_{CZ}(\ket{q_0},\ket{q_2})$} & $\alpha_{5,2}(00) = \alpha_{4,2}(00), \beta_{R_{5,2}}(00) = \beta_{R_{4,2}}(00), \beta_{I_{5,2}}(00) = \beta_{I_{4,2}}(00)$, \\
    $s_5(01)$                               & & $\alpha_{5,0}(01) = \alpha_{4,0}(01), \beta_{R_{5,0}}(01) 
                                            \beta_{R_{4,0}}(01), \beta_{I_{5,0}}(01) = \beta_{I_{4,0}}(01)$, \\
                                            & & $\alpha_{5,1}(01) = \alpha_{4,1}(01), \beta_{R_{5,1}}(01) = \beta_{R_{4,1}}(01), \beta_{I_{5,1}}(01) = \beta_{I_{4,1}}(01)$, \\
                                            & & $\alpha_{5,2}(01) = \alpha_{4,2}(01), \beta_{R_{5,2}}(01) = \beta_{R_{4,2}}(01), \beta_{I_{5,2}}(01) = \beta_{I_{4,2}}(01)$, \\
    $s_5(10)$                               & & $\alpha_{5,0}(10) = \alpha_{4,0}(10), \beta_{R_{5,0}}(10) 
                                            \beta_{R_{4,0}}(10), \beta_{I_{5,0}}(10) = \beta_{I_{4,0}}(10)$, \\
                                            & & $\alpha_{5,1}(10) = \alpha_{4,1}(10), \beta_{R_{5,1}}(10) = \beta_{R_{4,1}}(10), \beta_{I_{5,1}}(10) = \beta_{I_{4,1}}(10)$, \\
                                            & & $\alpha_{5,2}(10) = \alpha{3,2}(10), \beta_{R_{5,2}}(10) = -\beta_{R_{4,2}}(10), \beta_{I_{5,2}}(10) = -\beta_{I_{4,2}}(10)$, % TODO: what happens with imaginary unit? 
                                            \\
    $s_5(11)$                               & & $\alpha_{5,0}(11) = \alpha_{4,0}(11), \beta_{R_{5,0}}(11) 
                                            \beta_{R_{4,0}}(11), \beta_{I_{5,0}}(11) = \beta_{I_{4,0}}(11)$, \\
                                            & & $\alpha_{5,1}(11) = \alpha_{4,1}(11), \beta_{R_{5,1}}(11) = \beta_{R_{4,1}}(11), \beta_{I_{5,1}}(11) = \beta_{I_{4,1}}(11)$, \\
                                            & & $\alpha_{5,2}(11) = \alpha{3,2}(11), \beta_{R_{5,2}}(11) = -\beta{R_{4,2}}(11), \beta_{I_{5,2}}(11) = -\beta_{I_{4,2}}(11)$, % TODO: what happens with imaginary unit? 
                                            \\ 
    \bottomrule
    \caption{Full SMT formula of teleportation: operations.}
    \label{tab:tp-smt-verbose-operations}
\end{tabularx}

\normalsize

\begin{tabularx}{1.05\linewidth}{@{}X@{}}\toprule
    \textbf{Assertions} \\ \midrule
    $\alpha_{5,2}(00) = \alpha_{0,0}, \quad \beta_{R_{5,2}}(00) = \beta_{R_{0,0}}, \quad \beta_{I_{5,2}}(00) = \beta_{I_{0,0}}$, \\
    $\alpha_{5,2}(01) = \alpha_{0,0}, \quad \beta_{R_{5,2}}(01) = \beta_{R_{0,0}}, \quad \beta_{I_{5,2}}(01) = \beta_{I_{0,0}}$, \\
    $\alpha_{5,2}(10) = \alpha_{0,0}, \quad \beta_{R_{5,2}}(10) = \beta_{R_{0,0}}, \quad \beta_{I_{5,2}}(10) = \beta_{I_{0,0}}$, \\
    $\alpha_{5,2}(11) = \alpha_{0,0}, \quad \beta_{R_{5,2}}(11) = \beta_{R_{0,0}}, \quad \beta_{I_{5,2}}(11) = \beta_{I_{0,0}}$. 
    \\ \bottomrule
    \caption{Full SMT formula of teleportation: specification.}
    \label{tab:tp-smt-verbose-specification}
\end{tabularx}

\end{document}